\let\accentvec\vec  
\let\vec\accentvec 
\setlist[enumerate,1]{label=\arabic*., ref=(\arabic*)}
\providecommand*{\toclevel@title}{0}
\providecommand*{\toclevel@author}{0}
\definecolor{dark-red}{rgb}{0.4,0.15,0.15}
\definecolor{dark-blue}{rgb}{0.15,0.15,0.4}
\definecolor{medium-blue}{rgb}{0,0,0.5}
\definecolor{gray}{rgb}{0.5,0.5,0.5}
\newenvironment{relemma}[1]{\medskip\noindent\textbf{Lemma \ref{#1}.}\it}{}
\newenvironment{retheorem}[1]{\medskip\noindent\textbf{Theorem \ref{#1}.}\it}{}
\newcommand{\problemdef}[3]
{
\begin{quote}
\textsc{#1}\\
\textbf{Input:} #2\\
\textbf{Question:} #3
\end{quote}
}
\newcommand{\parproblemdef}[4]
{
\begin{quote}
\textsc{#1}\\
\textbf{Input:} #2\\
\textbf{Parameter:} #3\\
\textbf{Question:} #4
\end{quote}
}
\newcommand{\ttree}{\mathbb{T}}
\newcommand{\tobs}{\widetilde{\mathbb{T}}}
\newcommand{\yes}[0]{\textsc{yes}\xspace}
\newcommand{\no}[0]{\textsc{no}\xspace}
\newcommand{\orsymb}[0]{\textsc{or}\xspace}
\newcommand{\andsymb}[0]{\textsc{and}\xspace}
\newcommand{\dotcup}{\mathop{\dot\cup}}
\newcommand{\npeqconp}[0]{NP~$=$~coNP\xspace}
\newcommand{\containment}[0]{NP~$\subseteq$ coNP$/$poly\xspace}
\newcommand{\ncontainment}[0]{NP~$\not \subseteq$ coNP$/$poly\xspace}
\newcommand{\F}[0]{\ensuremath{\mathcal{F}}\xspace}
\renewcommand{\O}[0]{\ensuremath{\mathbb{O}}\xspace}
\newcommand{\Q}[0]{\ensuremath{\mathcal{Q}}\xspace}
\renewcommand{\L}[0]{\ensuremath{\mathcal{L}}\xspace}
\newcommand{\G}[0]{\ensuremath{\mathcal{G}}\xspace}
\newcommand{\X}[0]{\ensuremath{\mathcal{X}}\xspace}
\renewcommand{\P}[0]{\ensuremath{\mathcal{P}}\xspace}
\newcommand{\eqvr}[0]{\ensuremath{\mathcal{R}}\xspace}
\let\plainsquareforqed\squareforqed
\newcommand{\claimqed}{\renewcommand{\squareforqed}{$\diamondsuit$}\qed\renewcommand{\squareforqed}{\plainsquareforqed}}
\newcommand{\pw}[0]{\mathop{\mathrm{\textsc{pw}}}}
\newcommand{\Oh}[0]{\ensuremath{\mathcal{O}}\xspace}
\newlength{\baseImageHeight}
\newcommand{\hyphen}{\nobreakdash-\hspace{0pt}}
\newcommand{\FDeletion}[0]{\textsc{$\F$\hyphen Deletion}\xspace}
\newcommand{\ThreeColoringByComponentSize}[0]{\textsc{$3$\hyphen Coloring [Comp. size]}\xspace}
\newcommand{\PathwidthImprovement}[0]{\textsc{Pathwidth Improvement}\xspace}
\newcommand{\Pathwidth}[0]{\textsc{Pathwidth}\xspace}
\newcommand{\kPathwidth}[0]{\textsc{$k$\hyphen Path\-width}\xspace}
\newcommand{\kRamsey}[0]{\textsc{$k$\hyphen Ramsey}\xspace}
\newcommand{\qColoring}[0]{\textsc{$q$\hyphen Coloring}\xspace}
\newcommand{\treePictures}{
\begin{figure}[t]
\centering
\tikzset{thick,>=stealth,
level distance=4mm,
level 1/.style={sibling distance=10mm},
level 2/.style={sibling distance=3mm},
every node/.style={fill=white,draw=black,circle,inner sep=0,minimum width=0.15cm}}
\tikzset{small/.style={
minimum width=0.01cm,
}}
\subfigure[The graph~$\ttree^2$.]{
\begin{tikzpicture}[thick]
\node[fill=white,draw=white] (dummy) at (0cm, -1.3cm) {};

\node {}  
child foreach \x in {1,2,3} 
	{node {} child foreach \y in {1,2,3} 
			{node{}  }};
			
\end{tikzpicture}
}
\subfigure[The graph~$\tobs^2$.]
{
\begin{tikzpicture}[thick]
\node[fill=white,draw=white] (dummy) at (0cm, -1.3cm) {};

\node {}  
child foreach \x in {1,2,3} 
	{node {} child foreach \y in {1,2,3} 
			{node{} child foreach \w in {1} {node {}} }};

\end{tikzpicture}
}
\subfigure[The graph~$\tobs^2 \diamond 3$.]
{
\begin{tikzpicture}[thick,level 1/.style={sibling distance=14mm},level 2/.style={sibling distance=4mm}]
\node[fill=white,draw=white] (dummy) at (0cm, -1.3cm) {};

\tikzset{every node/.style={fill=white,draw=black,circle,minimum width=0.3cm}};

\node (root) {}
child foreach \x in {1,2,3} 
	{node {} child foreach \y in {1,2,3} 
			{node{} child foreach \w in {1} {node {}} }};

\tikzset{innerVertex/.style={thin,draw=black,fill=black,inner sep=0,minimum width=0.05cm}};

\foreach \i in {root,root-1,root-2,root-3,root-1-1,root-1-2,root-1-3,root-1-1-1,root-1-2-1,root-1-3-1,root-2-1,root-2-2,root-2-3,root-2-1-1,root-2-2-1,root-2-3-1,root-3-1,root-3-2,root-3-3,root-3-1-1,root-3-2-1,root-3-3-1}
{
	\node[innerVertex] (a) at ($(\i) + (90:0.08cm)$) {};
	\node[innerVertex] (b) at ($(\i) + (210:0.08cm)$) {};
	\node[innerVertex] (c) at ($(\i) + (-30:0.08cm)$) {};
	\draw[thin] (a) -- (b);
	\draw[thin] (b) -- (c);
	\draw[thin] (a) -- (c);
}
\end{tikzpicture}
}
\caption{A ternary tree, the corresponding obstruction, and its inflation. All possible edges between connected groups of vertices are present.} \label{img:ternarytrees}
\end{figure}
}
\newcommand{\crossCompositionPicture}{
\begin{figure}[t]
\centering
\begin{tikzpicture}[thick,level distance=6mm,level 1/.style={sibling distance=30mm},level 2/.style={sibling distance=10mm}]
\node[fill=white,draw=white] (dummy) at (0cm, -1.3cm) {};

\tikzset{every node/.style={fill=white,draw=black,circle,minimum width=0.3cm}};

\node (root) {}
child foreach \x in {1,2,3} 
	{node {} child foreach \y in {1,2,3} 
			{node{} child [level distance=8mm] {node {}} }};

\tikzset{innerVertex/.style={thin,draw=black,fill=black,inner sep=0,minimum width=0.05cm}};

\foreach \i in {root,root-1,root-2,root-3,root-1-1,root-1-2,root-1-3,root-1-1-1,root-1-2-1,root-1-3-1,root-2-1,root-2-2,root-2-3,root-2-1-1,root-2-2-1,root-2-3-1,root-3-1,root-3-2,root-3-3,root-3-1-1,root-3-2-1,root-3-3-1}
{
	\node[innerVertex] (a) at ($(\i) + (90:0.08cm)$) {};
	\node[innerVertex] (b) at ($(\i) + (210:0.08cm)$) {};
	\node[innerVertex] (c) at ($(\i) + (-30:0.08cm)$) {};
	\draw[thin] (a) -- (b);
	\draw[thin] (b) -- (c);
	\draw[thin] (a) -- (c);
}

\tikzset{instanceVertex/.style={thin,draw=black,fill=black,inner sep=0,minimum width=0.07cm}};

\node[minimum width=0.8cm] (G1) at (root-1-1-1) {};

\node[instanceVertex] (G1-1) at ($(root-1-1-1) + (0:0.2cm)$) {};
\node[instanceVertex] (G1-2) at ($(root-1-1-1) + (60:0.2cm)$) {};
\node[instanceVertex] (G1-3) at ($(root-1-1-1) + (120:0.2cm)$) {};
\node[instanceVertex] (G1-4) at ($(root-1-1-1) + (180:0.2cm)$) {};
\node[instanceVertex] (G1-5) at ($(root-1-1-1) + (240:0.2cm)$) {};
\node[instanceVertex] (G1-6) at ($(root-1-1-1) + (300:0.2cm)$) {};

\foreach \i \j in {1/2,2/3,3/4,4/5,5/6,6/1}
	\draw[thin] (G1-\i) -- (G1-\j);

\draw[thin] (G1-1) -- (G1-3);
\draw[thin] (G1-5) -- (G1-3);

\node[minimum width=0.8cm] (G2) at (root-1-2-1) {};
\foreach \x \i \j in {1/180/0.1cm,2/0/0.1cm,a/150/0.3cm,b/-150/0.3cm,c/30/0.3cm,d/-30/0.3cm}
	\node[instanceVertex] (G2-\x) at ($(root-1-2-1) + (\i:\j)$) {};
	
\foreach \i \j in {1/2,1/a,1/b,2/c,2/d}
	\draw[thin] (G2-\i) -- (G2-\j);

\node[minimum width=0.8cm] (G3) at (root-1-3-1) {};

\foreach \x \i \j in {1/0/0cm,a/30/0.3cm,b/-30/0.3cm,c/150/0.3cm,d/-150/0.3cm,e/0/0.3cm}
	\node[instanceVertex] (G3-\x) at ($(root-1-3-1) + (\i:\j)$) {};
	
\foreach \i \j in {1/a,1/b,1/c,1/d,c/d,1/e}
	\draw[thin] (G3-\i) -- (G3-\j);

\node[minimum width=0.8cm] (G-2-1) at (root-2-1-1) {};

\node[instanceVertex] (G2-1-1) at ($(root-2-1-1) + (0:0.2cm)$) {};
\node[instanceVertex] (G2-1-2) at ($(root-2-1-1) + (90:0.2cm)$) {};
\node[instanceVertex] (G2-1-3) at ($(root-2-1-1) + (180:0.2cm)$) {};
\node[instanceVertex] (G2-1-4) at ($(root-2-1-1) + (270:0.2cm)$) {};

\foreach \i \j in {1/2,3/2,1/4,3/4,1/3}
	\draw[thin] (G2-1-\i) -- (G2-1-\j);

\node[minimum width=0.8cm] (G-2-2) at (root-2-2-1) {};

\node[instanceVertex] (G2-2-1) at ($(root-2-2-1) + (0:0.2cm)$) {};
\node[instanceVertex] (G2-2-2) at ($(root-2-2-1) + (60:0.2cm)$) {};
\node[instanceVertex] (G2-2-3) at ($(root-2-2-1) + (120:0.2cm)$) {};
\node[instanceVertex] (G2-2-4) at ($(root-2-2-1) + (180:0.2cm)$) {};
\node[instanceVertex] (G2-2-5) at ($(root-2-2-1) + (240:0.2cm)$) {};
\node[instanceVertex] (G2-2-6) at ($(root-2-2-1) + (300:0.2cm)$) {};

\foreach \i \j in {1/2,2/3,3/4,4/5,5/6,6/1,2/4,2/5}
	\draw[thin] (G2-2-\i) -- (G2-2-\j);

\node[minimum width=0.8cm] (G-2-3) at (root-2-3-1) {};

\node[instanceVertex] (G2-3-1) at ($(root-2-3-1) + (0:0cm)$) {};
\node[instanceVertex] (G2-3-4) at ($(root-2-3-1) + (270:0.15cm)$) {};
\node[instanceVertex] (G2-3-3) at ($(root-2-3-1) + (270:0.3cm)$) {};

\node[instanceVertex] (G2-3-2) at ($(root-2-3-1) + (120:0.2cm)$) {};
\node[instanceVertex] (G2-3-5) at ($(root-2-3-1) + (60:0.2cm)$) {};

\foreach \i \j in {2/5,1/2,1/5,1/4,4/3}
	\draw[thin] (G2-3-\i) -- (G2-3-\j);

\node[minimum width=0.8cm] (G-3-1) at (root-3-1-1) {};

\node[instanceVertex] (G3-1-1) at ($(root-3-1-1) + (0:0.2cm)$) {};
\node[instanceVertex] (G3-1-2) at ($(root-3-1-1) + (60:0.2cm)$) {};
\node[instanceVertex] (G3-1-3) at ($(root-3-1-1) + (120:0.2cm)$) {};
\node[instanceVertex] (G3-1-5) at ($(root-3-1-1) + (240:0.2cm)$) {};
\node[instanceVertex] (G3-1-6) at ($(root-3-1-1) + (300:0.2cm)$) {};

\foreach \i \j in {1/2,2/3,5/6,6/1}
	\draw[thin] (G3-1-\i) -- (G3-1-\j);

\draw[thin] (G3-1-1) -- (G3-1-3);
\draw[thin] (G3-1-5) -- (G3-1-3);

\node[minimum width=0.8cm] (G-3-2) at (root-3-2-1) {};

\node[instanceVertex] (G3-2-1) at ($(root-3-2-1) + (0:0.2cm)$) {};
\node[instanceVertex] (G3-2-3) at ($(root-3-2-1) + (180:0.2cm)$) {};
\node[instanceVertex] (G3-2-4) at ($(root-3-2-1) + (270:0.2cm)$) {};

\node[instanceVertex] (G3-2-2) at ($(root-3-2-1) + (120:0.2cm)$) {};
\node[instanceVertex] (G3-2-5) at ($(root-3-2-1) + (60:0.2cm)$) {};

\foreach \i \j in {1/4,3/4,1/3,2/5}
	\draw[thin] (G3-2-\i) -- (G3-2-\j);

\node[minimum width=0.8cm] (G-3-3) at (root-3-3-1) {};

\foreach \x \i \j in {1/0/0cm,a/30/0.3cm,b/-30/0.3cm,c/150/0.3cm,d/-150/0.3cm}
	\node[instanceVertex] (G3-\x) at ($(root-3-3-1) + (\i:\j)$) {};
	
\foreach \i \j in {1/a,1/b,1/c,1/d,c/d,a/b}
	\draw[thin] (G3-\i) -- (G3-\j);

\end{tikzpicture}
\caption{Result of \orsymb-cross-composing nine inputs with~$k=3$ into one.} \label{img:crosscomposition}
\end{figure}
}
\title{FPT is Characterized by\\Useful Obstruction Sets\thanks{This work was supported by the Netherlands Organization for Scientific Research (NWO), project ``KERNELS: Combinatorial Analysis of Data Reduction''.}}
\author{Michael R.\ Fellows \inst{1} \and Bart M.\ P.\ Jansen \inst{2}}
\institute{
Charles Darwin University, Australia. \email{Michael.Fellows@cdu.edu.au}
\and
Utrecht University, The Netherlands.
\email{B.M.P.Jansen@uu.nl}
}
\begin{document}

\hypersetup{bookmarksdepth=-1}

\maketitle

\hypersetup{bookmarksdepth=2}

\begin{abstract}
Many graph problems were first shown to be fixed-parameter tractable using the results of Robertson and Seymour on graph minors. We show that the combination of finite, computable, obstruction sets and efficient order tests is not just one way of obtaining strongly uniform FPT algorithms, but that \emph{all} of FPT may be captured in this way. Our new characterization of FPT has a strong connection to the theory of kernelization, as we prove that problems with polynomial kernels can be characterized by obstruction sets whose elements have polynomial size. Consequently we investigate the interplay between the sizes of problem kernels and the sizes of the elements of such obstruction sets, obtaining several examples of how results in one area yield new insights in the other. We show how exponential-size minor-minimal obstructions for pathwidth~$k$ form the crucial ingredient in a novel \orsymb-cross-composition for \kPathwidth, complementing the trivial \andsymb-composition that is known for this problem. In the other direction, we show that \orsymb-cross-compositions into a parameterized problem can be used to rule out the existence of efficiently generated quasi-orders on its instances that characterize the \no-instances by polynomial-size obstructions.
\end{abstract}

\begin{bibunit}[abbrv]

\section{Introduction}
This paper is concerned with the connection between fixed-parameter tractability, kernelization, and the characterization of parameterized problems by efficiently testable obstruction sets. Historically, this connection has been a major impetus to the development of the field of parameterized complexity. The results of the Graph Minors project were applied to obtain some of the first classifications~\cite{FellowsL88a} of problems as (nonuniformly) fixed-parameter tractable. Robertson and Seymour proved that the set of unlabeled finite graphs is well-quasi-ordered by the minor relation~\cite{RobertsonS04}. By standard well-quasi-order theory, this implies that any set of graphs~$\F$ that is closed under taking minors (a \emph{lower ideal in the minor order}) is characterized by a \emph{finite} obstruction set~$\O_\F$ in the following sense: a graph is contained in~$\F$ if and only if it does not contain an element of~$\O_\F$ as a minor. They also provided an algorithm for each fixed graph~$H$ that tests, given a graph~$G$, whether~$H$ is a minor of~$G$ in~$\Oh(n^3)$ time~\cite{RobertsonS95b}. 

The algorithmic implications of this machinery are well known. Consider a parameterized graph problem~$\Q$ whose input consists of a graph~$G$ and integer~$k$. If~$\Q$ is minor-closed, that is, if~$(G',k)$ is a \yes-instance whenever~$(G,k)$ is a \yes-instance and~$G'$ is a minor of~$G$, then~$\Q$ can be solved in~$\Oh(n^3)$ time for each fixed~$k$. As the \yes-instances of a fixed parameter value~$k$ form a minor ideal, there is a finite obstruction set~$\O_k$ that characterizes the ideal. Thus we can decide whether~$(G,k) \in \Q$ by testing for each graph in~$\O_k$ whether it is a minor of~$G$. By deriving an algorithm to compute the obstruction sets~$\O_k$, this approach yields constructive, uniform FPT algorithms (cf.~\cite[\S 7.9.2]{DowneyF99}).

Our first result in this paper shows that the described tools for developing FPT algorithms --- efficient order tests for quasi-orders that characterize the \yes-instances of a fixed parameter value by finite obstructions sets --- are not just \emph{one} way of obtaining (strongly uniform) FPT characterizations, but that in fact all of FPT can be characterized in this way. For this general result we relax from the minor order and instead consider arbitrary quasi-orders on the set of instances~$\Sigma^* \times \mathbb{N}$ of a parameterized problem (see Section~\ref{section:preliminaries} for definitions).

We introduce some terminology to state the characterization. A quasi-order is a reflexive and transitive binary relation~$\preceq$ on a set~$S$. For elements~$x,y \in S$ such that~$x \preceq y$ we say that \emph{$x$ precedes~$y$}. If~$x$ precedes~$y$ and~$x \neq y$ then~\emph{$x$ strictly precedes~$y$}, denoted~$x \prec y$. A quasi-order~$\preceq$ is \emph{polynomial-time} if there is an algorithm that decides whether~$x \preceq y$ in~$\Oh((|x|+|y|)^{\Oh(1)})$ time. If~$S$ is a subset of a universe~$U$ and~$\preceq$ is a quasi-order on~$U$, then~$S$ is a \emph{lower ideal} of~$U$ if~$x \in S$ and~$x' \preceq x$ together imply that~$x' \in S$. Our characterization extends the folklore result stating that all problems in FPT have kernels. 

\begin{theorem} \label{theorem:fptCharacterization}
For any parameterized problem~$\Q \subseteq \Sigma^* \times \mathbb{N}$, the following statements are equivalent:
\begin{enumerate}
	\item Problem~$\Q$ is strongly uniformly fixed-parameter tractable.\label{characterization:fpt}
	\item Problem~$\Q$ is decidable and admits a kernel whose size is computable.\label{characterization:kernel}
	\item Problem~$\Q$ is decidable and there is a polynomial-time quasi-order~$\preceq$ on~$\Sigma^* \times \mathbb{N}$ and a computable function~$f \colon \mathbb{N} \to \mathbb{N}$ such that: \label{characterization:quasiorder}
	\begin{enumerate}[label=\alph*.,ref=(\arabic{enumi}.\alph{enumii})]
		\item The set~$\Q$ is a lower ideal of~$\Sigma^* \times \mathbb{N}$ under~$\preceq$.\label{characterization:lowerideal}
		\item For every~$(x,k) \not \in \Q$ there is an \emph{obstruction}~$(x',k') \not \in \Q$ of size at most~$f(k)$ with~$(x',k') \preceq (x,k)$.\label{characterization:obstruction}
	\end{enumerate}
\end{enumerate}
\end{theorem}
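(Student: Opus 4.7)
The plan is to prove the three-way equivalence by establishing $(1) \Leftrightarrow (2)$, $(2) \Rightarrow (3)$, and $(3) \Rightarrow (1)$. The equivalence $(1) \Leftrightarrow (2)$ is the folklore characterization of FPT as ``decidable plus admits a kernel of computable size''. For the forward direction, given an FPT algorithm running in $f(k) \cdot |x|^c$ time with $f$ computable, on input $(x,k)$ one simulates the algorithm for $|x|^{c+1}$ steps; if it terminates one outputs a trivial yes- or no-instance of constant size, otherwise $f(k) > |x|$ and $(x,k)$ itself is a kernel of size bounded by a computable function of $k$. The reverse direction is immediate: kernelize in polynomial time, then run the decidability algorithm on the bounded-size kernel in time depending only on $k$.

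For $(2) \Rightarrow (3)$, I would convert a kernelization $K$ into a polynomial-time quasi-order by setting
\[ (x',k') \preceq (x,k) \iff (x',k') = (x,k) \text{ or } (x',k') = K(x,k). \]
This relation is reflexive, polynomial-time decidable (run $K$ once and compare), a lower ideal (because $K$ preserves membership in $\Q$), and satisfies the small-obstruction requirement (because $K(x,k)$ has size bounded by a computable function of $k$). The delicate point is transitivity: a chain with both steps using the kernel gives $(x'',k'') = K(K(x,k))$, which need not equal $K(x,k)$. To handle this I would first modify $K$ to an idempotent variant $K^*$ by guarding the call to $K$: return the input unchanged whenever it is already below the kernel-size bound. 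After a suitable normalization of the kernel (for instance, preserving the parameter and ensuring the size-bound function is monotone), every output of $K^*$ is already small, so applying $K^*$ a second time returns it unchanged, restoring transitivity.

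For $(3) \Rightarrow (1)$, given the polynomial-time quasi-order $\preceq$, the obstruction-size bound $f$, and the decidability algorithm, one decides $(x,k) \in \Q$ as follows: enumerate all strings $(x',k')$ of total size at most $f(k)$, of which there are at most $|\Sigma|^{f(k)+1}$. For each candidate, check in polynomial time whether $(x',k') \preceq (x,k)$, and use the decidability algorithm to test whether $(x',k') \in \Q$; the latter runs in time bounded by a computable function of $k$ since the candidate's size is bounded by $f(k)$. By the lower-ideal property, a no-instance preceding $(x,k)$ certifies $(x,k) \notin \Q$; by the small-obstruction property, such a certificate must exist whenever $(x,k) \notin \Q$. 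This gives a strongly uniform FPT algorithm with running time $g(k) + |x|^{O(1)}$ for a computable $g$. I expect the principal obstacle to be the idempotency step in $(2) \Rightarrow (3)$: forcing $K^*(K^*(x,k)) = K^*(x,k)$ while preserving polynomial running time and the kernel-size guarantee requires careful bookkeeping on how the size-bound function behaves under a second application of the kernel.
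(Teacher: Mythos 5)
Your overall decomposition $(1)\Leftrightarrow(2)$, $(2)\Rightarrow(3)$, $(3)\Rightarrow(1)$ matches the paper's, and the $(1)\Leftrightarrow(2)$ and $(3)\Rightarrow(1)$ steps are essentially the same arguments the paper uses. (Minor point on $(3)\Rightarrow(1)$: the running time is $g(k)\cdot|x|^{\Oh(1)}$ rather than $g(k)+|x|^{\Oh(1)}$, since each of the $|\Sigma|^{\Oh(f(k))}$ order tests takes time polynomial in $|x|$; this is still strongly uniform FPT.)

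The gap is in $(2)\Rightarrow(3)$. You correctly identify transitivity as the problem, but your proposed fix does not go through as stated. You define $K^*$ to return its input unchanged ``whenever it is already below the kernel-size bound.'' On the second application, the input is $(x',k') = K(x,k)$ with $|(x',k')|\leq f(k)$, but the relevant size bound is now $f(k')$, not $f(k)$, and a kernelization may well decrease the parameter, so $f(k')$ can be smaller than $f(k)$ and the guard need not fire. You propose to ``normalize'' $K$ so that it preserves the parameter, but this is not a normalization one can perform in general: the kernel must output an \emph{equivalent} instance, and there is no reason an instance with the original parameter $k$ should be equivalent to $(x,k)$ after shrinking $x$. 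Monotonicity of $f$ does not help either, since $k'\leq k$ works against you here. So the idempotency of $K^*$ is left as an unproven assumption, and transitivity of $\preceq$ is not established.

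The paper sidesteps this entirely: define $(x',k')\preceq(x,k)$ iff $(x',k')=(x,k)$ or $(x',k')$ is the result of iterating $K$ on $(x,k)$ \emph{until the size no longer strictly decreases}. The fixed point $\hat K(x,k)$ of this iteration satisfies $|K(\hat K(x,k))|\geq|\hat K(x,k)|$ by construction, so $\hat K(\hat K(x,k))=\hat K(x,k)$ automatically --- no normalization of $K$ is needed --- and transitivity follows immediately. The iteration runs in polynomial time because each step strictly shrinks the instance, so there are at most $|(x,k)|$ rounds, and the obstruction-size bound $f(k)$ is preserved because the final output was produced by one application of $K$. If you replace your size-guard idea with this iterate-to-a-fixed-point construction, your $(2)\Rightarrow(3)$ step becomes correct and matches the paper.
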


Let us make some remarks about the theorem. Criterion~\ref{characterization:obstruction} is stated in terms of small obstructions rather than finite, computable obstruction sets, to make the subsequent theorem that proves the \emph{non-existence} of such quasi-orders (Theorem \ref{theorem:crosscomposition:obstruction:bounds}) stronger. The existence of computable obstruction sets follows directly from the given conditions, as will indeed be exploited in the proof of Theorem~\ref{theorem:fptCharacterization} in Section~\ref{section:characterization}. The proof also shows that problems with kernels of size~$f$ are characterized by obstructions of size~$f$ under polynomial-time quasi-orders. Hence problems with polynomial kernels can be characterized by polynomial-size obstructions. This general quantitative connection between kernel sizes and obstruction sizes leads us to investigate the relationship between the two in more concrete settings. While a construction due to Kratsch and Wahlstr\"om~\cite{KratschW11z} shows that it is unlikely that all problems characterized by polynomial-size obstructions have polynomial kernels, there is a rich interaction between the two domains, which forms the topic of the remainder of this work.

\textbf{A Cross-Composition Based on Large Obstructions.}
In Section~\ref{section:or:composition:for:pw} we give an example of how properties of obstruction sets can be exploited to obtain kernel bounds. Our example concerns the \kPathwidth problem, which asks whether the pathwidth of a given graph~$G$ is at most~$k$. For any sequence of graphs~$G_1, \ldots, G_t$, the disjoint union~$G_1 \dotcup G_2 \dot \cup \ldots \dot \cup G_t$ has pathwidth at most~$k$, if and only if each~$G_i$ has pathwidth at most~$k$. Hence there is a trivial \andsymb-composition~\cite{BodlaenderDFH09} for \kPathwidth. Using existing methods~\cite{BodlaenderDFH09,Drucker12} this proves that \kPathwidth does not admit a polynomial kernel unless \containment.

The majority of kernelization lower bounds currently known, however, are not obtained by \andsymb-composition but by \orsymb-(cross-)-composition~\cite{BodlaenderDFH09,BodlaenderJK11}: polynomial-time algorithms that take a sequence of instances as input, and output a single instance of bounded parameter value whose answer is \yes if and only if \emph{at least one} (rather than all) of the inputs are \yes-instances. Given the nature of the pathwidth problem, it seems to lend itself much better to \andsymb-composition than to \orsymb-(cross-)-composition. However, we show that an \orsymb-cross-composition into \kPathwidth can be obtained by embedding instances of a related problem into a minor-obstruction for pathwidth~$k$ containing~$\Theta(3^k)$ vertices. The properties of obstructions are exploited to ensure the correctness of this construction. The fact that the size of the obstruction is exponential in~$k$, is crucial to obtaining this superpolynomial kernelization lower bound. The construction illustrates how properties of obstruction sets can be used to obtain kernelization bounds.

\textbf{Bounds on Obstruction Sizes by Cross-Composition.}
We study how kernel bounds may be used to derive properties of obstruction sets in Section~\ref{section:crosscomposition:obstruction:bounds}. The \orsymb-cross-composition framework for kernelization lower bounds turns out to have interesting connections to obstruction sizes. We introduce the notion of an \emph{efficiently generated} quasi-order, which, roughly speaking, is a quasi-order such that the elements preceding a given instance~$(x,k)$ can appear on the output paths of a polynomial-time nondeterministic Turing machine. If there is an efficiently generated quasi-order on the instances of a parameterized problem, such that each \no-instance~$(x,k)$ is preceded by a \no-instance of size~$f(k)$ (an obstruction), then this results in a nondeterministic form of kernel, of size~$f(k)$. As an \orsymb-cross-composition together with a polynomial kernel implies that~\containment~\cite{BodlaenderJK11}, even in the nondeterministic setting~\cite{KratschPRR12}, this gives us the means to prove that certain parameterized problems are unlikely to be characterized by efficiently generated quasi-orders with polynomial-size obstructions. Using our \orsymb-cross-composition for \kPathwidth we can conclude that obstructions to \kPathwidth are not only of superpolynomial size in the minor order, but must be of superpolynomial size for all efficiently generated quasi-orders under which \kPathwidth is closed. Other examples of the connection between kernels and obstructions are discussed in Section~\ref{section:conclusion}.

\textbf{Related Work.} There are many alternative characterizations of FPT, as described for example by Flum and Grohe~\cite[\S 1.6]{FlumG06}. Obstruction sets form a popular topic of study (e.g.,~\cite{DinneenCF01,DinneenL07,Kinnersley92,RueST12,TakahashiUK94}). The task of computing obstruction sets has also been investigated thoroughly (e.g.,~\cite{CattellDDFL00,FellowsL89,Lagergren98}). Dinneen~\cite[Theorem 5]{Dinneen97} related properties of obstruction sets to complexity-theoretic assumptions. He showed that the number of elements in obstruction sets corresponding to NP-hard minor-closed graph problems with parameter~$k$ cannot be polynomial in~$k$, unless \containment.

\section{Preliminaries} \label{section:preliminaries}
\textbf{Parameterized complexity and kernels.}
A parameterized problem~$\Q$ is a subset of~$\Sigma^* \times \mathbb{N}$, the second component being the \emph{parameter}. For an instance~$(x,k) \in \Sigma^* \times \mathbb{N}$ we define the \emph{size} of~$(x,k)$ to be~$|(x,k)| := |x|+k$. A parameterized problem is (strongly uniformly) \emph{fixed-parameter tractable} if there exists an algorithm to decide whether $(x,k) \in \Q$ in time~$f(k)|x|^{\Oh(1)}$ where~$f$ is a computable function. We refer to the textbooks~\cite{DowneyF99,FlumG06} for more background on parameterized complexity.

A \emph{kernelization algorithm} (or \emph{kernel}) of size~$f \colon \mathbb{N} \to \mathbb{N}$ for a parameterized problem~$\Q \subseteq \Sigma^* \times \mathbb{N}$ is a polynomial-time algorithm that, on input~$(x,k) \in \Sigma^* \times \mathbb{N}$, outputs an instance~$(x', k')$ of size at most~$f(k)$ such that~$(x,k) \in \Q \Leftrightarrow (x', k') \in \Q$. If~$f(k) \in \Oh(k^{\Oh(1)})$ then this is a \emph{polynomial kernel} (cf.~\cite{Bodlaender09}). 

\textbf{Cross-composition.} We use the framework of cross-composition to prove kernel lower bounds, including the definition of a \emph{polynomial equivalence relation} and a \emph{cross-composition} as given by Bodlaender et al.~\cite{BodlaenderJK11}. These definitions are repeated in Appendix~\ref{appendix:crosscomposition} for completeness. To highlight the differences between \orsymb and \andsymb compositions, we call the type cross-composition defined by Bodlaender et al.~\cite{BodlaenderJK11} \emph{\orsymb-cross-composition}.

\begin{theorem}[{\cite{BodlaenderJK11}}] \label{theorem:crossCompositionNoKernel}
If a set~$\L \subseteq \Sigma^*$ is NP-hard under Karp reductions and~$\L$ \orsymb-cross-composes into the parameterized problem~$\Q$, then there is no polynomial kernel for~$\Q$ unless \containment.
\end{theorem}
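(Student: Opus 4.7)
The plan is to prove the theorem by contraposition: assume that $\Q$ admits a polynomial kernel, and combine it with the hypothesized \orsymb-cross-composition to produce an \orsymb-distillation of the NP-hard problem~$\L$, which in turn yields \containment via the distillation machinery.

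First, I would unfold the definitions (which the paper defers to the appendix). Let $\eqvr$ denote the polynomial equivalence relation on $\Sigma^*$ used by the cross-composition, and suppose that the cross-composition produces an instance $(y,k^*)$ of $\Q$ with $k^* \leq p(\max_i |x_i|, \log t)$ for some polynomial $p$, whose answer is the \orsymb of the $t$ input instances $x_1,\ldots,x_t$ of~$\L$, provided they all lie in the same $\eqvr$-class. Assume for contradiction that $\Q$ admits a polynomial kernel of size $q(k)$. Given an arbitrary sequence $x_1,\ldots,x_t \in \Sigma^*$, I would first use $\eqvr$ to partition the sequence into equivalence classes in polynomial time; a standard padding/bookkeeping argument lets us assume that all instances have equal length~$n$ and sit in a single class (otherwise we process each class separately and take the \orsymb of the kernel outputs, which only blows up the size by a factor equal to the number of classes, which is polynomial in the total input length).

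Next, apply the \orsymb-cross-composition to the (equalized) sequence to obtain an instance $(y,k^*)$ of $\Q$ of parameter $k^* \leq p(n,\log t)$, then apply the polynomial kernel to obtain an equivalent instance $(y',k')$ of $\Q$ of size at most $q(k^*) \leq q(p(n,\log t))$. Choosing the number of inputs $t$ large relative to $n$ (say $t = 2^{n}$), this quantity is polynomial in $n$, so the kernelized instance has size polynomial in a single input length, independent of~$t$. The membership $(y',k') \in \Q$ is equivalent, via the correctness of the cross-composition and the kernel, to $\bigvee_{i=1}^t [x_i \in \L]$. This is precisely an \orsymb-distillation of $\L$ into a target language (namely $\Q$ encoded as a classical language) of polynomial bit-size.

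Finally, I would invoke the standard consequence of such a distillation: since $\L$ is NP-hard under Karp reductions, an \orsymb-distillation of $\L$ with polynomially bounded output size implies \containment, by the results of Fortnow and Santhanam as lifted to the \orsymb-cross-composition setting by Bodlaender, Downey, Fellows, and Hermelin and refined by Bodlaender, Jansen, and Kratsch. This contradicts the assumption \ncontainment, finishing the proof. The main obstacle I expect is not any single step but the careful bookkeeping between the two parameters of the equivalence relation (the input length $n$ and the number of instances $t$): one must ensure that the chosen $t$ simultaneously forces the kernel output to be subpolynomial per input while the cross-composition's parameter bound still leaves room for the kernel to compress nontrivially. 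This is handled exactly as in the original proof in~\cite{BodlaenderJK11}, and no new ideas are required beyond faithfully reproducing their argument.
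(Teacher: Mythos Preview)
The paper does not prove this theorem at all: it is stated with a citation to~\cite{BodlaenderJK11} and used as a black box. Your sketch is a faithful outline of the original argument from that reference (cross-composition plus polynomial kernel yields an \orsymb-distillation, then apply Fortnow--Santhanam), so there is nothing to compare against in the present paper.
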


Theorem~\ref{theorem:crossCompositionNoKernel} has been extended to the co-nondeterministic setting in recent publications. Kratsch et al.~\cite[Theorem 2]{KratschPRR12} exploited the fact that the lower bound machinery also works if the cross-composition is co-nondeterministic.

\textbf{Graphs.}
All graphs we consider are finite, simple, and undirected. An undirected graph~$G$ consists of a vertex set~$V(G)$ and an edge set~$E(G)$, whose members are 2-element subsets of~$V(G)$. We write~$G \subseteq H$ if graph~$G$ is a subgraph of graph~$H$. 
The \emph{clique number}~$\omega(G)$ of~$G$ is the size of a largest clique in~$G$. For a set of vertices~$X$ in a graph~$G$ we use~$G - X$ to denote the graph that results after deleting all vertices of~$X$ and their incident edges. When deleting a single vertex~$v$, we write~$G-v$ rather than~$G - \{v\}$. 
Graph~$H$ is a \emph{minor} of graph~$G$ if~$H$ can be obtained from a subgraph of~$H$ by edge contractions. If~$H \neq G$ is a minor of~$G$, then~$H$ is a \emph{proper minor} of~$G$. A vertex of degree at most one is a \emph{leaf}.

A \emph{path decomposition} of a graph~$G$ is a sequence $(\X_1, \ldots, \linebreak[0] \X_r)$ of subsets of~$V(G)$, called \emph{bags}, such that: (i) $\bigcup _{i \in [r]} \X_i = V(G)$, (ii) for each edge~$\{u,v\} \in E(G)$ there is a bag~$\X_i$ containing~$v$ and~$w$, and (iii) for each~$v \in V(G)$, the bags containing~$v$ are consecutive in the sequence. The \emph{width} of a path decomposition is~$\max _{1 \leq i \leq r} |\X_i| - 1$. The \emph{pathwidth} of a graph~$G$, denoted~$\pw(G)$, is the minimum width over all path decompositions of~$G$. We say that an edge~$\{u,v\}$ is \emph{realized} by any bag that contains~$u$ and~$v$. Condition~(iii) is also called the \emph{convexity property} of path decompositions. The set~$\{1, 2, \ldots, n\}$ is abbreviated as~$[n]$.

\section{Characterizing Problems in FPT by Small Obstructions}\label{section:characterization}
In this section we present the proof of Theorem~\ref{theorem:fptCharacterization} and consider some of its consequences.

\begin{proof}[of Theorem~\ref{theorem:fptCharacterization}]
Let~$\Q$ be a parameterized problem. It is well-known that conditions~\ref{characterization:fpt} and~\ref{characterization:kernel} are equivalent~\cite[Theorem 1]{Bodlaender09}. We prove that~\ref{characterization:quasiorder}$\Rightarrow$\ref{characterization:fpt} and that~\ref{characterization:kernel}$\Rightarrow$\ref{characterization:quasiorder}.

\ref{characterization:quasiorder}$\Rightarrow$\ref{characterization:fpt}. Consider a combination of~$\preceq$ and~$f \colon \mathbb{N} \to \mathbb{N}$ that satisfies the preconditions to~\ref{characterization:quasiorder}. We obtain an FPT algorithm that decides~$\Q$ by showing that there is an algorithm that computes bounded-size obstruction sets to membership in~$\Q$. Let~$k \in \mathbb{N}$ and define~$O_k$ as the \no-instances of~$\Q$ that have size at most~$f(k)$. Let~$\O_k$ be the elements of~$O_k$ that are minimal under~$\preceq$, i.e., those elements of~$O_k$ that are not preceded by another element of~$O_k$.

\begin{claim}
Let~$k \in \mathbb{N}$. For any~$x \in \Sigma^*$ we have~$(x,k) \in \Q$ if and only if there is no element in~$\O_k$ that precedes~$(x,k)$.
\end{claim}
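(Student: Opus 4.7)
The plan is to prove both directions directly from the given properties, treating the forward direction by the lower-ideal condition \ref{characterization:lowerideal} and the backward direction by combining the bounded-obstruction condition \ref{characterization:obstruction} with a minimality argument in the finite set~$O_k$.

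The first preparatory observation is that~$O_k$ is finite: since the size of an instance~$(x,k')$ is defined as~$|x|+k'$ and~$O_k$ contains only instances of size at most~$f(k)$, both the parameter and the string length are bounded by~$f(k)$, and~$\Sigma$ is a finite alphabet, so only finitely many pairs~$(x,k') \in \Sigma^* \times \mathbb{N}$ qualify. Consequently, minimal elements of~$O_k$ under~$\preceq$ exist, and every element of~$O_k$ has some element of~$\O_k$ preceding it: starting from any~$(y,\ell) \in O_k$, if~$(y,\ell) \notin \O_k$ then by definition there is some~$(y',\ell') \in O_k$ with~$(y',\ell') \prec (y,\ell)$; since~$O_k$ is finite this descent must terminate at an element of~$\O_k$, and transitivity of~$\preceq$ yields an element of~$\O_k$ that precedes~$(y,\ell)$.

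For the forward direction, suppose~$(x,k) \in \Q$ and, for contradiction, that some~$(x',k') \in \O_k$ satisfies~$(x',k') \preceq (x,k)$. Property~\ref{characterization:lowerideal} says that~$\Q$ is a lower ideal of~$\Sigma^* \times \mathbb{N}$ under~$\preceq$, so from~$(x,k) \in \Q$ and~$(x',k') \preceq (x,k)$ we obtain~$(x',k') \in \Q$. This contradicts~$(x',k') \in \O_k \subseteq O_k$, as~$O_k$ contains only \no-instances.

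For the backward direction, suppose no element of~$\O_k$ precedes~$(x,k)$, and for contradiction that~$(x,k) \notin \Q$. By condition~\ref{characterization:obstruction}, there is an obstruction~$(x^*,k^*) \notin \Q$ of size at most~$f(k)$ with~$(x^*,k^*) \preceq (x,k)$; hence~$(x^*,k^*) \in O_k$. By the preparatory observation, some~$(x',k') \in \O_k$ satisfies~$(x',k') \preceq (x^*,k^*)$, and then transitivity of~$\preceq$ gives~$(x',k') \preceq (x,k)$, contradicting our assumption. The main obstacle is really only the bookkeeping that yields finiteness of~$O_k$ and a well-defined minimal element below any obstruction; once that is in place, both implications are immediate from the lower-ideal and obstruction properties in~\ref{characterization:quasiorder}.
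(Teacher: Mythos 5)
Your proof is correct and takes essentially the same approach as the paper: the forward direction uses the lower-ideal condition~\ref{characterization:lowerideal}, and the backward direction uses condition~\ref{characterization:obstruction} to obtain a small preceding \no-instance in~$O_k$ and then drops to a minimal element of~$O_k$, finishing by transitivity. The only difference is that you make the finiteness of~$O_k$ (from the bound~$f(k)$ and the finite alphabet) and the descent to a member of~$\O_k$ explicit, whereas the paper simply asserts ``there is a minimal \no-instance with these properties''; note that both your descent and the paper's assertion tacitly assume that minimal elements in the sense of~$\O_k$ exist below any obstruction, which for a general quasi-order requires excluding distinct mutually-preceding elements --- a subtlety the paper itself leaves unaddressed, so it is not a defect you introduced.
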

\begin{proof}
Fix some~$k \in \mathbb{N}$ and consider some~$x \in \Sigma^*$. If~$(x,k)$ is a \yes-instance then all elements that precede it under~$\preceq$ are \yes-instances, by~\ref{characterization:lowerideal}. If~$(x,k)$ is a \no-instance, then by~\ref{characterization:quasiorder} there is an obstruction~$(x',k')$ of size at most~$f(k)$ that is a \no-instance of~$\Q$ and precedes~$(x,k)$. But then there is a minimal \no-instance with these properties, which is contained in~$\O_k$ by definition. Hence there is an element of~$\O_k$ that precedes~$(x,k)$.
\claimqed
\end{proof}

There is an algorithm that, on input~$k \in \mathbb{N}$, computes the set~$\O_k$: this follows from the facts that~$\Q$ is decidable,~$f$ is computable, and~$\preceq$ is polynomial-time. From the algorithm that computes the obstruction sets~$\O_k$ we obtain a strongly uniformly fixed-parameter tractable algorithm for~$\Q$, as follows. On input~$(x,k)$, compute the set~$\O_k$. Test if there is an obstruction in~$\O_k$ that precedes~$(x,k)$ using the order testing algorithm for~$\preceq$. By the claim, the answer to~$(x,k)$ is \yes if and only if there is no such preceding element. The running time is bounded by~$g(k)|x|^{\Oh(1)}$ for some computable function~$g$: the time to compute~$\O_k$ is computable, while the~$|\O_k|$ order tests take~$\Oh((f(k) + |(x,k)|)^{\Oh(1)})$ time each.

\ref{characterization:kernel}$\Rightarrow$\ref{characterization:quasiorder}. Let~$K$ be a kernelization algorithm for~$\Q$ that maps instances~$(x,k)$ to equivalent instances~$(x',k')$ of size at most~$f$, for some computable function~$f$. We define a polynomial-time quasi-order~$\preceq$ by giving an algorithm that decides, given~$(x,k)$ and~$(x',k')$, whether~$(x',k') \preceq (x,k)$. The algorithm proceeds as follows. If~$(x',k') = (x,k)$ then it immediately output \yes. Otherwise, it sets~$(x^*, k^*) := (x,k)$. While~$|K(x^*,k^*)| < |(x^*, k^*)|$ it replaces~$(x^*, k^*)$ by~$K(x^*, k^*)$, i.e., it repeatedly applies the kernelization algorithm until this no longer decreases the total size of the instance. It then outputs \yes if and only if~$(x',k')$ equals the resulting instance~$(x^*,k^*)$.

\begin{claim}
The relation~$\preceq$ defined by the algorithm is a polynomial quasi-order and~$\Q$ is a lower ideal under~$\preceq$.
\end{claim}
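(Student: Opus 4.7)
First I would introduce the shorthand $\Phi(x,k)$ for the instance produced by iterating $K$ starting from $(x,k)$, i.e., the value of $(x^*,k^*)$ after the while loop of the algorithm terminates. With this notation, the algorithm outputs \yes on $\bigl((x',k'),(x,k)\bigr)$ precisely when either $(x',k')=(x,k)$ or $(x',k')=\Phi(x,k)$. Reflexivity of $\preceq$ is then immediate from the first conditional. Polynomial-time decidability is also quick: the while loop strictly decreases $|(x^*,k^*)|$ at each iteration, so it runs at most $|(x,k)|$ times, and each step invokes the polynomial-time kernelization $K$ once on an instance of size at most $|(x,k)|$.

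The main substance of the claim is transitivity, which I expect to be the most delicate point. Suppose $(x'',k'')\preceq (x',k')$ and $(x',k')\preceq (x,k)$. I would first dispose of the easy cases in which $(x'',k'')=(x',k')$ or $(x',k')=(x,k)$, where transitivity reduces to the other relation. The remaining case is that $(x',k')=\Phi(x,k)$ and $(x'',k'')=\Phi(x',k')$ with all three instances pairwise distinct. The key observation is that $\Phi(x,k)$ is a fixed point of the kernelization in the size-decreasing sense: $|K(\Phi(x,k))|\ge|\Phi(x,k)|$, since otherwise the outer loop would have continued. Consequently, the loop started from $(x',k')=\Phi(x,k)$ does not iterate at all, so $\Phi(x',k')=(x',k')$, forcing $(x'',k'')=(x',k')$ and contradicting strictness. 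Hence this case is vacuous and transitivity holds.

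Finally, to see that $\Q$ is a lower ideal under $\preceq$, I would appeal directly to the defining property of a kernel: $K$ maps each instance to an \emph{equivalent} one, so $(x,k)\in\Q \Leftrightarrow K(x,k)\in\Q$. Iterating this along the while loop gives $\Phi(x,k)\in \Q \Leftrightarrow (x,k)\in\Q$. Combined with the trivial equality branch, this shows that whenever $(x,k)\in\Q$ and $(x',k')\preceq(x,k)$, also $(x',k')\in\Q$, completing the claim.
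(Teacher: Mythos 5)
Your proof is correct and matches the paper's argument essentially step for step: the same polynomial-time bound via strictly decreasing instance sizes, the same fixed-point observation that $\Phi(x,k)$ is not further shrunk by $K$ (so the third case of transitivity collapses), and the same appeal to kernel-equivalence for the lower-ideal property. The only cosmetic difference is that you package the loop's output as an explicit map $\Phi$, which makes the case analysis slightly cleaner to state but is the same idea.
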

\begin{proof}
The number of iterations made by the algorithm on inputs~$(x,k)$ and $(x',k')$ is bounded by~$|x| + k$, as the length of the instance is decreased in each iteration. As each invocation of~$K$ takes polynomial time, the entire comparison algorithm executes in polynomial time. 

It is obvious that~$\preceq$ is reflexive. To prove that it is a quasi-order, it remains to prove transitivity. Consider three instances such that~$(x'',k'') \preceq (x',k') \preceq (x,k)$. We shall prove that~$(x'',k'') = (x',k')$ or~$(x',k') = (x,k)$, which obviously implies that~$(x'',k'') \preceq (x,k)$. So assume that~$(x',k') \neq (x,k)$. By definition of the algorithm that decides~$\preceq$, it then follows that~$(x',k')$ is the unique instance that is obtained from~$(x,k)$ by repeatedly applying the kernelization algorithm~$K$ until it no longer strictly shrinks the size of the instance. Hence for~$(x',k')$ we know that~$|K(x',k')| \geq |(x',k')|$. But then any instance~$(x^*,k^*)$ with~$(x^*,k^*) \preceq (x',k')$ must be identical to~$(x',k')$, by that same definition. Thus~$(x'',k'') = (x',k')$, which implies that~$(x'',k'') \preceq (x,k)$. Hence~$\preceq$ is transitive.

Finally let us establish that~$\Q$ is a lower ideal of~$\Sigma^* \times \mathbb{N}$ under~$\preceq$. Since a kernelization maps an instance to an equivalent instance, it is easily seen that if~$(x',k') \preceq (x,k)$ then~$(x',k') \in \Q \Leftrightarrow (x,k) \in \Q$. Hence~$(x,k) \in \Q$ and~$(x',k') \preceq (x,k)$ together imply that~$(x',k') \in \Q$.
\claimqed
\end{proof}

\begin{claim}
For every~$(x,k) \not \in \Q$ there is an \emph{obstruction}~$(x',k') \not \in \Q$ of size at most~$f(k)$ with~$(x',k') \preceq (x,k)$.
\end{claim}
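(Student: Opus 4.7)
The plan is to let $(x',k')$ be exactly the instance produced by applying the kernelization-iteration procedure from the definition of $\preceq$ to $(x,k)$. Concretely, I would initialize $(x^*,k^*) := (x,k)$ and, while $|K(x^*,k^*)| < |(x^*,k^*)|$, replace $(x^*,k^*)$ by $K(x^*,k^*)$; then set $(x',k') := (x^*,k^*)$ once the loop halts. This choice is forced: it is essentially the only candidate I can name that is guaranteed to be $\preceq$-below $(x,k)$.

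Three verifications remain. First, $(x',k') \preceq (x,k)$ follows directly from the definition of $\preceq$: if the loop performs no iteration then $(x',k') = (x,k)$ and the equality case of the defining algorithm applies, while otherwise $(x',k')$ is by construction the fixed point the algorithm for $\preceq$ would produce when comparing any candidate against $(x,k)$. Second, $(x',k') \notin \Q$ because every step of the loop applies $K$, which maps an instance to an equivalent one; composing these equivalences yields $(x',k') \in \Q \Leftrightarrow (x,k) \in \Q$, and the right-hand side fails by hypothesis.

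Third, I need $|(x',k')| \leq f(k)$, which I would handle by a short case distinction. If the loop terminates without ever executing its body, then the termination condition gives $|K(x,k)| \geq |(x,k)|$, and combining this with the kernel guarantee $|K(x,k)| \leq f(k)$ yields $|(x,k)| \leq f(k)$, so the obstruction $(x',k') = (x,k)$ already has the required size. If at least one iteration occurs, the first application of $K$ to $(x,k)$ produces an instance of size at most $f(k)$ by the kernel property, and every further iteration strictly decreases the size, so the final $(x',k')$ still satisfies $|(x',k')| \leq f(k)$.

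The only subtle point, and the one I would be careful to address, is that the kernel bound $f(k)$ refers to the parameter of the \emph{original} input; a second application of $K$ to an already-kernelized instance only bounds its output in terms of the intermediate parameter, which could a priori exceed $f(k)$. The size bound therefore does not come from the kernel property alone — it crucially exploits the fact that the loop is designed to iterate \emph{only while the total size strictly decreases}, which is precisely what pins the final size below $f(k)$ once the first iteration has taken place.
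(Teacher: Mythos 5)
Your proposal is correct and follows the paper's proof essentially verbatim: you define $(x',k')$ by iterating $K$ until the size stops strictly decreasing, then argue preservation of membership and the size bound. If anything, your case analysis for $|(x',k')|\leq f(k)$ (no iteration vs.\ at least one iteration, noting that the kernel guarantee only applies at the first call and that the loop's size-decreasing condition carries the bound from there) is spelled out more carefully than the paper's one-sentence treatment of the same point.
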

\begin{proof}
Consider some~$(x,k) \not \in \Q$. Let~$(x',k')$ be the result of applying kernelization~$K$ to the instance, as long as its total size decreases by this operation. By definition of~$\preceq$ we have~$(x',k') \preceq (x,k)$. Since the kernelization preserves the membership status in~$\Q$ we find that~$(x',k')$ is a \no-instance. Since~$K$ is a kernel of size~$f(k)$ we have~$|K(x,k)| \leq f(k)$, which implies that~$|(x',k')| \leq f(k)$.
\claimqed
\end{proof}

The two claims show that the combination of~$\preceq$ and the function~$f$ satisfy the requirements of property~\ref{characterization:quasiorder}, concluding the proof.
\qed
\end{proof}

In the proof of Theorem~\ref{theorem:fptCharacterization}, the size of the obstructions of~\ref{characterization:obstruction} matches the size bound of the kernel from which the quasi-order~$\preceq$ is derived. Hence problems with polynomial kernels can be characterized by polynomial-size obstructions.

\begin{corollary} \label{corollary:kernel:yields:small:obstructions}
If~$\Q \subseteq \Sigma^* \times \mathbb{N}$ is a decidable parameterized problem with a kernel of size~$\Oh(k^c)$, then there is a polynomial-time quasi-order~$\preceq$ on~$\Sigma^* \times \mathbb{N}$ and a function~$f$ that together satisfy statement~\ref{characterization:quasiorder} of Theorem~\ref{theorem:fptCharacterization}, with~$f(k) \in \Oh(k^c)$.
\end{corollary}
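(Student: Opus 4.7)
The plan is to invoke the construction from the (\ref{characterization:kernel})$\Rightarrow$(\ref{characterization:quasiorder}) direction of the proof of Theorem~\ref{theorem:fptCharacterization} verbatim and simply track the size bound through the argument. By hypothesis, $\Q$ is decidable and admits a kernelization algorithm $K$ whose size bound is $f(k) \in \Oh(k^c)$; this function is certainly computable, so the preconditions of statement~(\ref{characterization:kernel}) hold. I would then define the polynomial-time quasi-order $\preceq$ exactly as in the theorem's proof: on input $((x',k'),(x,k))$, return \yes if the two instances coincide, and otherwise iterate $K$ on $(x,k)$ until the instance size no longer strictly decreases, returning \yes iff the resulting fixed point equals $(x',k')$. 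The proofs that $\preceq$ is reflexive, transitive, polynomial-time decidable, and that $\Q$ is a lower ideal under $\preceq$ are unchanged; none of these rely on the specific shape of $f$.

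The quantitative content of the corollary is extracted from the obstruction claim inside the (\ref{characterization:kernel})$\Rightarrow$(\ref{characterization:quasiorder}) direction. For any $(x,k) \not \in \Q$, the fixed point $(x',k')$ reached by iterating $K$ is a \no-instance with $(x',k') \preceq (x,k)$, and its size is bounded by $|K(x,k)| \leq f(k) = \Oh(k^c)$, because a single application of $K$ already outputs an instance of size at most $f(k)$ and further applications cannot grow it. Hence the same function $f$ that bounds the kernel also bounds the obstructions, giving exactly the size requirement $f(k) \in \Oh(k^c)$ demanded by the statement.

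There is essentially no obstacle beyond bookkeeping: the whole proof is a reuse of an already-established implication, and the only new observation is that the size of the obstruction produced by the canonical construction equals the size of the kernel, a fact that is already implicit in the proof of Theorem~\ref{theorem:fptCharacterization}. I would write the corollary's proof as a two-line appeal to that theorem together with the explicit identification of $f$ with the kernel's size function.
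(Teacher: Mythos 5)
Your proposal is correct and matches the paper's own reasoning exactly: the paper notes immediately after the proof of Theorem~\ref{theorem:fptCharacterization} that the obstruction size in statement~\ref{characterization:obstruction} equals the kernel's size bound, and the corollary follows by reusing the \ref{characterization:kernel}$\Rightarrow$\ref{characterization:quasiorder} construction with $f(k) \in \Oh(k^c)$. Nothing further is needed.
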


It follows from a construction by Kratsch and Wahlstr\"om~\cite{KratschW11z} that the converse of Corollary~\ref{corollary:kernel:yields:small:obstructions} is false, assuming \ncontainment. We give a concrete example of a problem that is characterized by efficiently testable obstructions of polynomial size, yet is unlikely to admit a polynomial kernel.

\parproblemdef{\ThreeColoringByComponentSize}
{An undirected graph~$G$ and an integer~$k$ that bounds the maximum size of a connected component in~$G$.}
{$k$.}
{Is there a proper $3$-coloring of the vertices of~$G$?}

\begin{lemma}[$\bigstar$, Cf.~\cite{KratschW11z}] \label{lemma:obstructions:donot:imply:kernel}
\ThreeColoringByComponentSize does not admit a polynomial kernel unless \containment, but there is a polynomial-time quasi-order on its instances that satisfies statement~\ref{characterization:quasiorder} of Theorem~\ref{theorem:fptCharacterization} with~$f(k) \in \Oh(k^2)$.
\end{lemma}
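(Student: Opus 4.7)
I plan to prove the kernel lower bound by an \andsymb-cross-composition from the NP-hard problem \ThreeColoring (on arbitrary graphs) into \ThreeColoringByComponentSize. Use as polynomial equivalence relation the one that classifies two \ThreeColoring-instances as equivalent iff they have the same number of vertices; malformed strings go into one dummy class. Given $t$ equivalent inputs $G_1, \ldots, G_t$ on $n$ vertices each, output the instance $(G_1 \dotcup \cdots \dotcup G_t,\, n)$: the disjoint union has maximum component size at most $n$, so the output is a valid instance with parameter bounded polynomially in the maximum input size, and it is a \yes-instance iff \emph{every} $G_i$ is. Invoking the AND-variant of the composition machinery~\cite{Drucker12} then rules out a polynomial kernel for \ThreeColoringByComponentSize unless \containment.

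\textbf{Quadratic obstructions (second half).} For the second part I will introduce a polynomial-time quasi-order under which every \no-instance is preceded by one of its non-$3$-colorable connected components. Fix a standard encoding in which the vertices of a graph are labelled by consecutive integers. Define $(x',k') \preceq (x,k)$ to hold iff $(x',k') = (x,k)$, or $k' = k$ and both strings encode valid instances with graphs $G',G$ satisfying $V(G') \subseteq V(G)$, $G' = G[V(G')]$, and no edge of $G$ crosses the cut $(V(G'), V(G) \setminus V(G'))$; extend the relation trivially to malformed strings. Membership tests take polynomial time. Reflexivity is immediate; transitivity holds because a union of components of a graph $G'$ which itself was obtained from $G$ by removing whole components is again a union of components of $G$. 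Closure of $\Q$ downward is equally straightforward: deleting whole components of $G$ can only decrease the max-component-size and preserves $3$-colorability.

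\textbf{Obstruction bound and main difficulty.} For any \no-instance $(G,k)$ of \ThreeColoringByComponentSize, pick a connected component $C$ of $G$ that is not $3$-colorable; then $(C,k) \preceq (G,k)$ is also a \no-instance, and since $|V(C)| \leq k$ a standard encoding of $C$ uses $\Oh(k^2)$ bits, so the total obstruction size is $\Oh(k^2)$, as required. The main obstacle is less conceptual than bookkeeping: one must define $\preceq$ on raw strings in $\Sigma^* \times \mathbb{N}$ rather than on graphs up to isomorphism, which is why the definition pins down the vertex labels of $G'$ to be literally a subset of those of $G$, sidestepping graph-isomorphism testing entirely.
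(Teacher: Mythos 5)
The first half of your proof is essentially the paper's: the disjoint union acts as an \andsymb of the inputs, and Drucker's result rules out a polynomial kernel. You phrase it as an \andsymb-cross-composition from \ThreeColoring with a vertex-count equivalence relation, whereas the paper invokes the plain composition framework of Bodlaender et al.~directly on \ThreeColoringByComponentSize itself; both are fine.

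The second half has a genuine flaw in the definition of the quasi-order, hiding precisely in the ``bookkeeping'' you flag as the main difficulty. You fix an encoding where every graph's vertices are labelled by consecutive integers and then define $(x',k') \preceq (x,k)$ via $V(G') \subseteq V(G)$, $G' = G[V(G')]$, and no crossing edges. Under consecutive labelling this forces $V(G') = \{1,\dots,m\}$ and $V(G) = \{1,\dots,n\}$, so the condition $V(G') \subseteq V(G)$ is vacuous and the real content is that $G'$ is the induced subgraph on the \emph{initial segment} $\{1,\dots,m\}$ and that this initial segment is closed under edges. A non-$3$-colourable component $C$ of a \no-instance need not occupy an initial segment of the labels, so $(C,k)$ need not precede $(G,k)$ and condition~\ref{characterization:obstruction} fails. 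Your closing remark that the definition ``pins down the vertex labels of $G'$ to be literally a subset of those of $G$'' reveals you want arbitrary label subsets, but that is incompatible with the consecutive encoding you fixed, and allowing arbitrary labels would blow up the bit-length of the obstruction to depend on $n$, not just $k$. The paper avoids this by defining the relation on the raw adjacency-matrix strings: $x'$ precedes $x$ if $x'$ \emph{is} the submatrix of $x$ formed by the rows and columns of one connected component of the graph encoded by $x$. Taking the literal submatrix implicitly relabels the component to $\{1,\dots,|C|\}$, so the obstruction has size $\Oh(k^2)$, and deciding $\preceq$ is just enumerating components of $x$ and comparing strings --- no isomorphism test. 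You should reformulate your relation this way. Separately, ``extend the relation trivially to malformed strings'' is too loose: the paper fixes a concrete constant-size \no-instance $(x_N,k_N) = (K_4,4)$ and stipulates it precedes every malformed instance, so that malformed \no-instances also have a constant-size (hence $\Oh(k^2)$) obstruction; without some such clause, condition~\ref{characterization:obstruction} is violated for them.
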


\section{OR-Cross-Composition into \texorpdfstring{$k$}{k}-Pathwidth} \label{section:or:composition:for:pw}
A \emph{minor-minimal obstruction to pathwidth~$k$} is a graph of pathwidth~$k+1$, such that all its proper minors have pathwidth~$\leq k$. Minor-minimal obstructions to pathwidth~$k$ of size~$\Theta(3^k)$ form the crucial ingredient for an \orsymb-cross-composition of an NP-complete problem into \kPathwidth. The following \emph{improvement} version of the problem serves as the starting point for the composition.

\problemdef{\PathwidthImprovement}
{A graph~$G$, an integer~$k$ with~$2 \leq k \leq |V(G)|$, and a path decomposition~\P of~$G$ having width~$k-1$.}
{Is the pathwidth of~$G$ at most~$k-2$?}

\begin{lemma}[$\bigstar$] \label{lemma:improvement:npcomplete}
\PathwidthImprovement is NP-complete.
\end{lemma}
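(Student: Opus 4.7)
Membership in \NP is immediate: nondeterministically guess a path decomposition of~$G$ with at most polynomially many bags of size at most~$k-1$, and verify the three defining axioms of a path decomposition in polynomial time. A successful guess certifies that~$\pw(G) \leq k-2$.

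For \NP-hardness, my plan is to reduce from the classical \Pathwidth problem via a construction that produces, alongside the output graph, a canonical path decomposition of width exactly one greater than the target pathwidth. The technical difficulty is that the input to \PathwidthImprovement demands a witness decomposition of width~$k-1$, effectively promising that~$\pw(G) \leq k-1$, whereas bare \Pathwidth offers no such witness. I would therefore reduce not from \Pathwidth itself but from an NP-hard combinatorial source problem such as \textsc{3-Partition} or \textsc{Cutwidth} using one of the existing \NP-hardness reductions for \Pathwidth (e.g.\ in the style of Bodlaender--Gilbert--Hafsteinsson--Kloks). The output graphs~$G$ of these reductions, with target width~$t$, satisfy~$\pw(G) \in \{t, t+1\}$, and moreover a width-$(t+1)$ path decomposition can be written down in polynomial time directly from the source instance (for 3-Partition, by reading the items in input order and instantiating the corresponding gadget layout). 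Setting~$k := t+2$ and taking~$\P$ to be this canonical width-$(t+1)$ decomposition gives a many-one reduction: the \PathwidthImprovement question "$\pw(G) \leq k-2 = t$" is equivalent to the \yes-answer of the original combinatorial instance.

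The main obstacle will be verifying that a known \NP-hardness proof of \Pathwidth has the required $+1$-gap plus explicit-layout structure, or, if the cleanest reference lacks it, adapting the reduction so that the \no-instances still admit a polynomial-time constructible decomposition exactly one above the optimum. As a sanity check, the padding lemma~$\pw(G + v) = \pw(G) + 1$ when~$v$ is joined to all of~$V(G)$ (which follows from the convexity property of path decompositions by restricting any minimum-width decomposition of~$G+v$ to the interval of bags containing~$v$) provides a generic tool for inflating widths by a prescribed additive constant and extending a given decomposition of~$G$ to one of~$G'$ of the target width~$k-1$, should the base reduction not already have the desired form.
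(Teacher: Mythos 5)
Your NP-membership argument is fine, and you correctly pinpoint the real issue: the input to \PathwidthImprovement must come with a width-$(k-1)$ path decomposition, so a reduction has to \emph{produce} a witness decomposition whose width is exactly one above the target. You note this is the "main obstacle" and then defer it to unverified structural claims about existing NP-hardness reductions for \Pathwidth (gap-$1$ output plus explicit layout). That deferral is where the gap is.

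\textbf{Why your fallback does not close it.} Your backup tool, repeatedly joining a universal vertex~$v$ so that $\pw(G+v) = \pw(G)+1$, does not create the required promise: adding $i$ universal vertices raises the pathwidth by $i$, but the only decomposition you can write down without already knowing $\pw(G)$ is the trivial one whose bags contain $V(G)$ plus the new vertices, of width $|V(G)|+i-1$. The slack between that and the new pathwidth is still $|V(G)|-1-\pw(G)$; padding with a clique shifts both quantities in lockstep, so the gap never shrinks to $1$.

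\textbf{What the paper does instead.} The paper also reduces from \Pathwidth (so no detour through \textsc{3-Partition} or \textsc{Cutwidth} is needed), but the gadget is a join with an \emph{independent set}, not a clique. Given $(G,k)$ with $n=|V(G)|$ and $k\leq n-2$, set $i:=n-k-1$ and form $G':=G\otimes(i\cdot K_1)$. Because $i\cdot K_1$ has pathwidth $0$, the join formula $\pw(G_1\otimes G_2)=\min(\pw(G_1)+|V(G_2)|,\pw(G_2)+|V(G_1)|)$ gives $\pw(G')=\min(\pw(G)+i,\,n)$. Meanwhile, the \emph{trivial} decomposition --- $i$ consecutive bags, each equal to $V(G)$ together with a distinct independent-set vertex --- has width exactly $n$, which is taken as $k'-1$ with $k':=n+1$. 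The crucial point is that an independent set costs nothing in pathwidth but each new vertex can occupy its own bag, so the witness width stays at $n$ while the actual pathwidth climbs to $\pw(G)+i$. Thus $\pw(G')\leq k'-2=n-1$ holds iff $\pw(G)\leq k$, and the pathwidth of $G'$ is guaranteed to be either $n-1$ or $n$, giving exactly the gap-$1$ promise the problem requires. This self-contained construction replaces the speculative appeal to the fine structure of prior reductions; you should adopt the independent-set join in place of (or as the concrete realization of) your padding idea.
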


\noindent The path decomposition in the input of \PathwidthImprovement makes it possible to verify in polynomial time that the pathwidth of the graph does not exceed~$k-1$. The additive terms are chosen to simplify the correctness proof of the \orsymb-cross-composition. The exponential-size obstructions to pathwidth that we need for our construction are defined as follows.

\begin{definition}
For~$i \in \mathbb{N}_0$, let~$\ttree^i$ denote the complete ternary tree of height~$i$ with~$3^i$ leaves. Let~$\tobs^i$ be the graph obtained from~$\ttree^i$ by adding, for each leaf~$v$ of~$\ttree^i$, a new vertex that is only adjacent to~$v$.
\end{definition}

\begin{lemma}[$\bigstar$, Cf.~\cite{Kinnersley92,TakahashiUK94}] \label{lemma:ternarytrees:obstructions}
For~$k \in \mathbb{N}_0$ the graph~$\tobs^k$ is a minor-minimal obstruction to pathwidth~$k$.
\end{lemma}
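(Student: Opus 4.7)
The plan is to establish two claims: first, that $\pw(\tobs^k) = k+1$, so that $\tobs^k$ itself fails pathwidth at most $k$; second, that every proper minor of $\tobs^k$ has pathwidth at most $k$. Both claims are proved by induction on $k$. The base case $k=0$ is immediate, since $\tobs^0$ is a single edge of pathwidth $1$ whose only proper minors are the single vertex and the empty graph.

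For the lower bound $\pw(\tobs^k) \geq k+1$ in the inductive step I would invoke the following standard lemma for trees: if $T$ is a tree with $\pw(T) \leq k$ and $v \in V(T)$, then at most two connected components of $T - v$ have pathwidth equal to $k$. One proves the lemma by inspecting the (consecutive) bags of an optimal path decomposition that contain $v$: the convexity property splits the remaining bags into a left block and a right block, and each block can host the decomposition of at most one component that requires the full width $k$ on its own. Applied at the root $r$ of $\tobs^k$, whose removal yields three disjoint copies of $\tobs^{k-1}$, each of pathwidth $k$ by the induction hypothesis, this gives the desired contradiction. The matching upper bound $\pw(\tobs^k) \leq k+1$ is then obtained constructively: take path decompositions of width $k$ of the three child subtrees (guaranteed by induction), concatenate them, and insert the root into every bag; the resulting decomposition of $\tobs^k$ has width at most $k+1$.

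For minor-minimality it suffices, by monotonicity of pathwidth under minors, to verify that every single-step minor of $\tobs^k$ — the deletion of any vertex or edge and the contraction of any edge — has pathwidth at most $k$. The automorphisms of $\tobs^k$ reduce this to a handful of representative cases (a pendant vertex, a pendant edge, a tree edge at each level of $\ttree^k$, and an internal tree vertex), and in each case one proceeds by induction. The common thread is that any such operation destroys the ``three full copies of $\tobs^{k-1}$'' configuration at some ancestor: either a piece is disconnected away, leaving disjoint subgraphs each handled inductively, or one of the three branches at some vertex no longer contains the full obstruction and the lemma above permits a width-$k$ decomposition to be assembled. The main obstacle is verifying that the removal of a single pendant deep inside the tree already propagates an inductive drop in pathwidth all the way up to the root; this quantitative bookkeeping is the heart of the arguments of Kinnersley~\cite{Kinnersley92} and Takahashi, Ueno and Kajitani~\cite{TakahashiUK94}, which I would mirror.
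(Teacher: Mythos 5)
Your proposal takes a more elementary route than the paper for the pathwidth bounds, but it leaves a genuine gap in the minor-minimality half. The paper handles the entire lemma in one line by citing a compound result proved independently by Kinnersley~\cite[Theorem 4.3]{Kinnersley92} and Takahashi, Ueno and Kajitani~\cite[Theorem 2.5]{TakahashiUK94}: if $G_1, G_2, G_3$ are acyclic, connected, minor-minimal obstructions to pathwidth~$k$, then any graph obtained by adding a new vertex adjacent to exactly one vertex of each $G_i$ is a minor-minimal obstruction to pathwidth~$k+1$. Since $\tobs^0 = K_2$ is the unique such obstruction to pathwidth~$0$, and $\tobs^k$ is precisely a new vertex joined to the roots of three copies of $\tobs^{k-1}$, the whole claim follows by a one-line induction on~$k$, with no separate pathwidth computation needed.

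Your plan re-derives $\pw(\tobs^k) = k+1$ from first principles --- the lower bound via the classical ``three branches'' lemma for trees, the upper bound by concatenating the three child decompositions with the root inserted into every bag --- and that part is correct and self-contained. For minor-minimality, however, you merely sketch the reduction to single-step minors and then explicitly defer the crux (showing that removing a single deep pendant propagates a pathwidth drop all the way up the tree) to ``mirroring'' the very references the paper cites, without carrying out the argument. That propagation is exactly the content that the cited compound theorem encapsulates, and it is not a small amount of work. Note also that the tree lemma you state is only the \emph{necessity} direction of the characterization of tree pathwidth; to actually construct width-$k$ decompositions of the minors you would also need the \emph{sufficiency} direction, which your proposal never introduces. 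As written, the claim that every proper minor of $\tobs^k$ has pathwidth at most~$k$ is asserted but not established.
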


\noindent In our \orsymb-cross-composition, we need to \emph{inflate} obstructions before being able to embed a series of input instances into them.

\begin{definition}
Let~$G$ be a graph and let~$k \in \mathbb{N}$. The graph~$G \diamond k$, called the \emph{inflation of~$G$ by~$k$}, is defined as follows:
\begin{itemize}
	\item $V(G \diamond k) := \bigcup _{v \in V(G)} \{ v_1, \ldots, v_k \}$.
	\item Vertices~$u_i$ and~$v_j$ are adjacent in~$G \diamond k$ if~$u = v$ or~$\{u, v\} \in E(G)$.
\end{itemize}
For a vertex~$v \in V(G)$ we call the vertices~$v_1, \ldots, v_k$ in~$G \diamond k$ the \emph{copies of~$v$}.
\end{definition}

\treePictures

\noindent Refer to Fig.~\ref{img:ternarytrees} for an example. Inflation of a graph has a straight-forward effect on its pathwidth.

\begin{lemma}[$\bigstar$] \label{lemma:pathwidth:inflation}
For any graph~$G$ and~$k \in \mathbb{N}: \pw(G \diamond k) + 1 = k \cdot (\pw(G) + 1)$.
\end{lemma}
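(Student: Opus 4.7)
The proof naturally splits into matching upper and lower bounds on $\pw(G \diamond k)$, with the structural fact that the copies $\{v_1, \ldots, v_k\}$ form a clique in $G \diamond k$ (and that $\{u_1,\ldots,u_k\}\cup\{v_1,\ldots,v_k\}$ is a clique whenever $\{u,v\}\in E(G)$) driving both directions via the standard ``every clique sits in a common bag'' property of path decompositions.

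\textbf{Upper bound ($\leq$).} Start from an optimal path decomposition $(\X_1,\ldots,\X_r)$ of $G$ of width $\pw(G)$, and ``inflate'' it by setting $\X_i' := \bigcup_{v \in \X_i}\{v_1,\ldots,v_k\}$. I would then verify the three conditions of a path decomposition directly: every copy $v_j$ appears wherever $v$ did; an edge $\{u_i,v_j\}$ with $u\neq v$ is realized by any bag realizing $\{u,v\}\in E(G)$ in the original decomposition, while edges among copies of the same $v$ are realized by any bag containing $v$; and convexity transfers since $v_j$ appears exactly in those new bags whose originals contained $v$. Each new bag has size at most $k\cdot(\pw(G)+1)$, yielding $\pw(G\diamond k)+1 \leq k(\pw(G)+1)$.

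\textbf{Lower bound ($\geq$).} Starting from an optimal path decomposition $(\Y_1,\ldots,\Y_s)$ of $G\diamond k$ of width $\pw(G\diamond k)$, I would define
\[ \X_i := \{ v \in V(G) : \{v_1,\ldots,v_k\} \subseteq \Y_i \}. \]
The key step is to show this is a path decomposition of $G$. Since $\{v_1,\ldots,v_k\}$ is a clique in $G\diamond k$, the standard Helly-style lemma for path decompositions gives a bag containing all of them, hence $v\in\X_i$ for some $i$. For any edge $\{u,v\}\in E(G)$ the set $\{u_1,\ldots,u_k,v_1,\ldots,v_k\}$ is a clique of size $2k$ in $G\diamond k$, so some bag $\Y_i$ contains it entirely, placing $u,v \in \X_i$. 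Convexity is immediate: if all copies of $v$ lie in $\Y_i$ and in $\Y_{i''}$, then by the convexity of $\P'$ each copy individually lies in every intermediate $\Y_{i'}$, hence $v \in \X_{i'}$. Since $\X_i$ contains $v$ only when $\Y_i$ contains all $k$ of its copies, we obtain $|\Y_i| \geq k\cdot|\X_i|$ for every $i$, so $\pw(G\diamond k)+1 = \max_i|\Y_i| \geq k\cdot \max_i|\X_i| \geq k(\pw(G)+1)$.

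\textbf{Main obstacle.} The delicate point is the lower bound's convexity/validity argument: a naive definition of $\X_i$ as ``vertices with \emph{some} copy in $\Y_i$'' would satisfy the covering conditions but fail to give the desired $|\Y_i|\geq k|\X_i|$ size ratio. Insisting that $\X_i$ only collects vertices whose \emph{entire} copy-clique sits in $\Y_i$ is exactly what makes the width bound come out sharp, and the proof that this stricter definition still covers all edges is where the clique-in-a-bag property is invoked most crucially. Combining the two bounds yields the claimed equality $\pw(G\diamond k)+1 = k\cdot(\pw(G)+1)$. \qed
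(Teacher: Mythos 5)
Your proof is correct and follows essentially the same route as the paper: inflate a decomposition for the upper bound, and for the lower bound project each bag of an optimal decomposition of $G \diamond k$ down to those vertices of $G$ all of whose copies appear, using the clique-in-a-bag property to verify coverage. One small simplification you make is to observe directly that $\{u_1,\ldots,u_k,v_1,\ldots,v_k\}$ is a clique in $G\diamond k$ whenever $\{u,v\}\in E(G)$ and invoke the clique-containment lemma once, whereas the paper re-derives this fact via an inline Helly-property argument on the intervals of the copies; your shortcut is clean and avoids redundancy.
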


\begin{theorem}[$\bigstar$] \label{theorem:pathwidth:crosscomposition}
The \PathwidthImprovement problem \orsymb-cross-composes into \kPathwidth.
\end{theorem}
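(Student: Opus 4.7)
The plan is to apply the cross-composition framework with the polynomial equivalence relation that deems two \PathwidthImprovement instances equivalent when they agree on both $k$ and on the number of vertices, routing malformed inputs to a fixed \no-instance. With polynomially many classes in the input size, one may assume the $t$ given inputs $(G_1, k, \P_1), \ldots, (G_t, k, \P_t)$ share $k$ and $n := |V(G_i)|$, and by padding with copies of a fixed \no-instance one may further take $t = 3^h$ for $h := \lceil \log_3 t \rceil$.

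The construction is organized around the inflation $M := \tobs^h \diamond k$, which by \lemmaref{lemma:pathwidth:inflation} has pathwidth exactly $k(h+2)-1$ and size $\Theta(tk)$. Label the leaves of $\tobs^h$ as $\ell_1, \ldots, \ell_t$ with corresponding pendants $p_1, \ldots, p_t$, and let $L_i$ and $P_i$ denote the $k$-cliques of copies in $M$. Build $G^*$ by replacing, at each slot $i$, the pendant clique $P_i$ with an embedding of $G_i$ that uses the given width-$(k-1)$ decomposition $\P_i$ as a layout template, wiring $V(G_i)$ to the leaf clique $L_i$ so that the local gadget behaves like a pendant clique precisely when $\pw(G_i) \geq k-1$. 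Set the target parameter $k^* := k(h+2) - 1$, which is polynomial in $k + \log t$, as required.

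Soundness exploits the minor-minimality of $\tobs^h$. If some $G_i$ has $\pw(G_i) \leq k-2$, then $\tobs^h - p_i$ is a proper minor of $\tobs^h$, so $\pw(\tobs^h - p_i) \leq h$, and the inflation lemma gives $\pw((\tobs^h - p_i) \diamond k) \leq k(h+1) - 1$. This ``deleted-pendant'' skeleton decomposition leaves a width budget of $k$ at slot~$i$, into which one fits the improved width-$(k-2)$ decomposition of $G_i$, augmented by $L_i$, obtaining bags of size $\leq 2k-1$ at that slot. At each other slot $j \neq i$, the given $\P_j$ augmented by $L_j$ fits within the natural budget of the inflated tree, giving bags of size $\leq 2k$ locally. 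Stitching these local decompositions together along the inflated tree backbone yields a path decomposition of $G^*$ of width at most $k^*$.

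Completeness is the main obstacle: assuming every $G_j$ satisfies $\pw(G_j) \geq k-1$, one must show $\pw(G^*) \geq k^* + 1 = k(h+2)$. The strategy is to argue that any hypothetical decomposition of $G^*$ of width $\leq k^*$ would, after locally collapsing each attached gadget to its docking slot, descend to a path decomposition of $M$ of width strictly less than $k(h+2)-1$, contradicting the inflation lemma's tight bound on $\pw(M)$. The delicate step is making the collapse rigorous without assuming that $G_j$ contains $\tobs^{k-1}$ as a minor, since $G_j$ with $\pw(G_j) \geq k-1$ is only guaranteed to contain \emph{some} obstruction to pathwidth $k-2$. The workaround is a separator-based argument: the complete wiring between $V(G_j)$ and $L_j$ forces any bag separating the interior of $G_j$ from the rest of $G^*$ to contain all of $L_j$, while the lower bound $\pw(G_j) \geq k-1$ forces some bag restricted to $V(G_j)$ to have size at least $k$, so that the $j$-th slot can be identified with a pendant clique in a decomposition of $M$ without loss of width.
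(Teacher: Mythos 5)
Your construction is essentially the paper's (inflate $\tobs^h$ by $k$, replace each pendant clique by an input $G_i$, wire $V(G_i)$ completely to the copies of the parent), but the target parameter is off by one and this breaks the composition. You set $k^* := k(h+2) - 1$, which is exactly $\pw(\tobs^h \diamond k)$. In the \no case (every $\pw(G_j) \geq k-1$), the best lower bound your surgery can possibly yield is $\pw(G^*) \geq \pw(\tobs^h \diamond k) = k(h+2) - 1 = k^*$, which does \emph{not} contradict $\pw(G^*) \leq k^*$. Worse, this bound is tight: if every $G_j$ has pathwidth exactly $k-1$, one can splice the width-$(k-1)$ decompositions $\P^j$ into a decomposition of the full $\tobs^h \diamond k$ (whose bags have size $k(h+2)$) by removing the $k$ pendant copies and inserting bags of size $\leq k$, giving $\pw(G^*) \leq k(h+2) - 1 = k^*$. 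So $\pw(G^*) \leq k^*$ holds even when all inputs are \no-instances, and the composition fails. The paper's choice $k' := k(h+2) - 2$ is the correct one: the yes-case surgery (delete one pendant, inflate a width-$h$ decomposition, splice in the width-$(k-2)$ decomposition of the good $G_i$, and width-$(k-1)$ decompositions elsewhere) caps the bag size at $k(h+1) + (k-1) = k(h+2) - 1$, i.e.\ width $k(h+2)-2$, while the no-case lower bound $\pw(G^*) \geq k(h+2)-1 > k'$ now gives a genuine gap.

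Beyond the arithmetic, the completeness argument is underspecified. You correctly observe that a bag containing all of $L_j$ and a bag meeting $\geq k$ vertices of $V(G_j)$ both exist, but these need not be the same bag, and ``identifying the $j$-th slot with a pendant clique without loss of width'' requires a non-trivial surgery on the path decomposition. The paper sidesteps this entirely via the interval-supergraph characterization ($\pw(G) = \min\{\omega(H)-1 : H \supseteq G$ interval$\}$, Lemma~\ref{lemma:pathwidth:interval:supergraphs}): from a minimum interval supergraph $H'$ of $G'$, each $H'[V(G_j)]$ has clique number $\geq k$ when $\pw(G_j) \geq k-1$, so deleting $V(G_j) \setminus S_j$ for a $k$-clique $S_j$ yields an interval supergraph of $\tobs^h \diamond k$ with unchanged clique number — no bag bookkeeping required. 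I'd recommend adopting that route; it makes the collapse you sketch rigorous in one step. Also note that the decompositions $\P^i$ from the input play no role in the construction itself (they are used only to prove the yes-direction upper bound), so the phrase ``uses $\P_i$ as a layout template'' should be dropped.
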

\begin{proof}[Sketch]
Using a suitable choice of polynomial equivalence relation, permitted by the cross-composition framework~\cite{BodlaenderJK11}, it suffices to give a polynomial-time algorithm of the following form. The input is a sequence~$(G_1, k, \P^1), \linebreak[0] \ldots, \linebreak[0] (G_t, k, \P^t)$ of instances of \PathwidthImprovement that all share the same value of~$k$, and the output is a single instance~$(G', k')$ of \kPathwidth, with~$k'$ polynomial in~$\max _{i \in [t]} |V(G_i)| + \log t$, such that~$\pw(G') \leq k'$ if and only if there is a \yes-instance among the inputs. By standard arguments we may assume that~$t$ is a power of three, so let~$t = 3^s$ for~$s \in \mathbb{N}$.

The construction of~$(G',k')$ is based on the minor-minimal obstruction~$\tobs^s$. Label the~$3^s = t$ leaves of~$\tobs^s$ as~$x^1, \ldots, x^t$, and let~$y^1, \ldots, y^t$ be the parents of those leaves. As~$s \geq 1$ each vertex~$y_i$ has degree exactly two in~$\tobs^s$. We cross-compose the instances into a single graph~$G'$. It is obtained by inflating~$\tobs^s$ by a factor~$k$ and replacing each $k$-vertex clique containing the copies of a leaf~$x_i$ by the graph~$G_i$. More formally, we obtain~$G'$ as follows.
\begin{itemize}
	\item Initialize~$G'$ as the inflation~$\tobs^s \diamond k$. For each leaf~$x^i$ the copies created by the inflation form a clique of size~$k$ on vertices~$x^i_1, \ldots, x^i_k$.
	\item For each~$i \in [t]$, remove the vertices~$x^i_1, \ldots, x^i_k$ from~$G'$ and replace them by a copy of the graph~$G_i$. Make all vertices of~$G_i$ adjacent to the copies of the parent of~$x^i$, i.e., to the vertices~$y^i_1, \ldots, y^i_k$.
\end{itemize}
Refer to Fig.~\ref{img:crosscomposition} for an example. Let~$k' := k (s+2) - 2 \in \Oh(\max _{i \in [t]} |V(G)| \cdot \log t)$.

\crossCompositionPicture

\begin{claim}
$\pw(G') \leq k'$ if and only if there is an~$i \in [t]$ such that~$\pw(G_i) \leq k - 2$.
\end{claim}
The ``if'' direction is proven as follows. Suppose that for the instance~$(G_i, k, \P^i)$ of \PathwidthImprovement we have~$\pw(G_i) \leq k-2$. As~$\pw(\tobs^s - x_i) = s$ by Lemma~\ref{lemma:ternarytrees:obstructions}, the inflation satisfies~$\pw((\tobs^s - x_i) \diamond k) + 1 = k \cdot (s + 1)$ by Lemma~\ref{lemma:pathwidth:inflation}. From a path decomposition~$\P'$  of~$(\tobs^s - x_i) \diamond k$ we obtain a path decomposition of~$G' - V(G_i)$ of the same width: for each inserted instance~$G_j$ with~$j \neq i$ the vertices~$x^j_1, \ldots, x^j_k, y^j_1, \ldots, y^j_k$ form a clique in~$\tobs^s - x_i$. Hence~$\P'$ has a bag containing all those vertices, and we may replace~$x^j_1, \ldots, x^j_k$ by the width-$(k-1)$ decomposition~$\P^j$ of~$G_j$ without increasing the width. From a path decomposition of~$G' - V(G_i)$ we obtain a decomposition of~$G'$ by inserting a width-$(k-2)$ decomposition for~$G_i$ in the appropriate place, increasing the total width by~$k-1$ to~$k \cdot (s + 1) - 1 + (k-1) = k \cdot (s+2) - 2 = k'$.

The ``only if' direction of the claim is proven by contraposition. We use that the pathwidth of a graph equals~$\min _H(\omega(H) - 1)$ over its interval supergraphs~$H$. Suppose all inputs have pathwidth at least~$k-1$, implying all interval supergraphs of the inputs have clique number at least~$k$. An interval supergraph~$H'$ of~$G'$ contains interval supergraphs of~$G_1, \ldots, G_t$. As the latter all contain a clique of size at least~$k$, graph~$H'$ is a supergraph of~$\tobs^s \diamond k$. But then~$H'$ has pathwidth at least~$k \cdot (s+2) - 1$ by Lemmata~\ref{lemma:ternarytrees:obstructions} and~\ref{lemma:pathwidth:inflation}, implying the same for~$G'$.

The claim shows that~$(G',k')$ acts as the \orsymb of the inputs. As it can be built in polynomial time, this concludes the proof.
\qed
\end{proof}

Lemma~\ref{lemma:improvement:npcomplete} and Theorem~\ref{theorem:pathwidth:crosscomposition} provide a new way of proving that \kPathwidth does not admit a polynomial kernel unless \containment, by Theorem~\ref{theorem:crossCompositionNoKernel}.

\section{Proving Nonexistence of Small Obstructions} \label{section:crosscomposition:obstruction:bounds}

In this section we show how the kernelization lower-bound framework of \orsymb-cross-composition can be used to prove that a problem is \emph{not} characterized by polynomial-size obstructions under any quasi-order of the following form.

\begin{definition} \label{definition:efficiently:generated}
A quasi-order~$\preceq$ on~$\Sigma^* \times \mathbb{N}$ is \emph{efficiently generated} if there is a polynomial-time nondeterministic Turing machine that, on input~$(x,k) \in \Sigma^* \times \mathbb{N}$, outputs an instance~$(x',k') \in \Sigma^* \times \mathbb{N}$ on each computation path such that:
\begin{itemize}[topsep=2pt, partopsep=2pt]
	\item each output instance~$(x',k')$ precedes~$(x,k)$ under~$\preceq$, and
	\item all instances preceding~$(x,k)$ appear as the output of some computation path.
\end{itemize}
\end{definition}

\noindent Many well-known containment relations on graphs are efficiently generated. As a concrete example, consider the relation on parameterized graphs~$(G,k)$ encoded by adjacency matrices, where~$(G',k) \preceq (G,k)$ if~$G'$ is a minor of~$G$. This order is efficiently generated. A NDTM nondeterministically selects a subgraph~$G'$ of its input~$(G,k)$, then selects a set of edges to contract to obtain the minor~$G''$, and outputs~$(G'',k)$. By using a nondeterministically selected order on the vertices when encoding~$G''$ as an adjacency matrix, all isomorphism classes of the minor~$G''$ are generated. The correctness of the procedure is easy to verify.

Other efficiently generated quasi-orders on parameterized graphs, encoded as adjacency matrices, include the topological minor order, the (induced) subgraph order, the immersion order, and the contraction order (cf.~\cite[\S 7.8]{DowneyF99}). The quasi-order constructed in the proof of Theorem~\ref{theorem:fptCharacterization} is also efficiently generated. %

The following lemma, along with the notion of coNP-kernelization, could be considered folklore. Since the material never appeared in print, and has consequences for our discussion of obstruction sets, we present it here. A \emph{coNP-kernelization algorithm} (or \emph{coNP-kernel}) of size~$f \colon \mathbb{N} \to \mathbb{N}$ for a parameterized problem~$\Q$ is a polynomial-time nondeterministic Turing machine that, on input~$(x,k) \in \Sigma^* \times \mathbb{N}$, outputs an instance~$(x',k') \in \Sigma^* \times \mathbb{N}$ of size at most~$f(k)$ on each computation path, such that: (i) if~$(x,k) \in \Q$ then all computation paths output \yes-instances, and (ii) if~$(x,k) \not \in \Q$ then at least one computation path outputs a \no-instance.

\begin{lemma}[$\bigstar$] \label{lemma:efficient:po:yields:coNP:kernel}
Let~$\Q \subseteq \Sigma^* \times \mathbb{N}$ be a parameterized problem. If there is a polynomial~$p \colon \mathbb{N} \to \mathbb{N}$ and an efficiently generated quasi-order~$\preceq$ such that:
\begin{enumerate}[label=\alph*.,ref=(\alph*),topsep=2pt, partopsep=2pt] 
	\item $\Q$ is a lower ideal under~$\preceq$, and\label{coNP:kernel:lowerideal}
	\item for any~$(x,k) \not \in \Q$ there is an \emph{obstruction}~$(x',k') \not \in \Q$ of size at most~$p(k)$ with~$(x',k') \preceq (x,k)$,\label{coNP:kernel:obstruction}
\end{enumerate}
then~$\Q$ has a coNP-kernel of size~$p(k) + \Oh(1)$.
\end{lemma}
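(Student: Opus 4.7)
The plan is to chain the efficient generator of $\preceq$ with a size filter. Let $M_\preceq$ be the polynomial-time NDTM that witnesses efficient generation of~$\preceq$, and let $(y,\ell)$ be a fixed \yes-instance of~$\Q$ of minimum size. (If $\Q$ has no \yes-instance, then the lemma is trivial: outputting any hardcoded \no-instance on every computation path is a valid coNP-kernel of constant size.) The coNP-kernelization NDTM~$M$ then operates as follows on input~$(x,k)$: simulate $M_\preceq$ to nondeterministically produce an instance $(x',k')$ with $(x',k') \preceq (x,k)$; if $|(x',k')| \leq p(k)$, output $(x',k')$, and otherwise output the hardcoded $(y,\ell)$.

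The running time is polynomial since~$M_\preceq$ runs in polynomial time and the size check is trivial. Each computation path outputs an instance of size at most $\max(p(k), |(y,\ell)|) = p(k) + \Oh(1)$.

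Correctness reduces to the two properties of~$\preceq$. Suppose $(x,k) \in \Q$. By~\ref{coNP:kernel:lowerideal}, every instance preceding $(x,k)$ lies in~$\Q$, so on any branch the generator produces a \yes-instance; branches that fail the size test fall back to $(y,\ell)$, also a \yes-instance. Hence all paths output \yes-instances. Now suppose $(x,k) \notin \Q$. By~\ref{coNP:kernel:obstruction}, there exists a \no-instance $(x^*, k^*)$ of size at most $p(k)$ with $(x^*,k^*) \preceq (x,k)$. The defining property of an efficiently generated quasi-order is that \emph{every} instance preceding~$(x,k)$ appears as the output of \emph{some} branch of~$M_\preceq$; in particular, one branch produces $(x^*,k^*)$, which passes the size filter and is output verbatim as a \no-instance. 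Both requirements of a coNP-kernel are therefore met.

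The only conceptual subtlety — and the main obstacle one has to recognize — is the asymmetry between the two conditions in the coNP-kernel definition: \emph{all} branches must output \yes-instances on \yes-inputs, but only a \emph{single} branch needs to witness a \no-output on \no-inputs. This is precisely why the construction is sound: the lower-ideal property~\ref{coNP:kernel:lowerideal} guarantees that every generated preceding instance is automatically correct on a \yes-input (so the branches with oversized output can safely be replaced by the default $(y,\ell)$), while the small-obstruction property~\ref{coNP:kernel:obstruction} guarantees that \emph{at least one} branch survives the size filter on a \no-input. No additional work is needed to coordinate the two directions.
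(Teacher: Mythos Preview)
Your proof is correct and follows essentially the same route as the paper: simulate the efficient generator, pass through any generated instance of size at most~$p(k)$, and fall back to a fixed \yes-instance otherwise, with the lower-ideal property handling the \yes case and the small-obstruction property handling the \no case. The only cosmetic difference is that you pick a minimum-size \yes-instance while the paper picks an arbitrary one, which is immaterial to the~$\Oh(1)$ bound.
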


The coNP-kernel is built as follows: on input~$(x,k)$, generate the elements preceding it. If a generated element has size at most~$p(k)$ then output it, otherwise output a constant-size \yes-instance as the result of the computation path. The following theorem follows directly from Lemma~\ref{lemma:efficient:po:yields:coNP:kernel} together with the co-nondeterministic variant of Theorem~\ref{theorem:crossCompositionNoKernel} (see~\cite[Theorem 2]{KratschPRR12}).

\begin{theorem} \label{theorem:crosscomposition:obstruction:bounds}
Let~$\L$ be a language that is NP-hard under Karp reductions and that \orsymb-cross-composes into a parameterized problem~$\Q \subseteq \Sigma^* \times \mathbb{N}$. Assuming \ncontainment there is no efficiently generated quasi-order~$\preceq$ on~$\Sigma^* \times \mathbb{N}$ and polynomial~$p \colon \mathbb{N} \to \mathbb{N}$ such that:
\begin{itemize}[topsep=2pt, partopsep=2pt] 
	\item $\Q$ is a lower ideal under~$\preceq$.
	\item for any~$(x,k) \not \in \Q$ there is an \emph{obstruction}~$(x',k') \not \in \Q$ of size at most~$p(k)$ with~$(x',k') \preceq (x,k)$.
\end{itemize}
\end{theorem}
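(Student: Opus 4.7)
The plan is to derive the theorem as a direct combination of Lemma~\ref{lemma:efficient:po:yields:coNP:kernel} with the co-nondeterministic strengthening of Theorem~\ref{theorem:crossCompositionNoKernel} established by Kratsch et al.~\cite[Theorem~2]{KratschPRR12}. The argument proceeds by contraposition.

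First I would assume, toward contradiction, that there exist an efficiently generated quasi-order $\preceq$ on $\Sigma^* \times \mathbb{N}$ and a polynomial $p$ under which $\Q$ is a lower ideal and every instance $(x,k) \notin \Q$ is preceded by an obstruction of size at most $p(k)$. Feeding exactly this pair into Lemma~\ref{lemma:efficient:po:yields:coNP:kernel} yields a coNP-kernelization of $\Q$ whose output on any input $(x,k)$ has size at most $p(k) + \Oh(1)$. Since $p$ is polynomial, this constitutes a polynomial coNP-kernel for $\Q$.

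Next I would invoke the assumption that the NP-hard language $\L$ \orsymb-cross-composes into $\Q$. The co-nondeterministic variant of the cross-composition lower bound proved in~\cite[Theorem~2]{KratschPRR12} applies in exactly this setting: an NP-hard language \orsymb-cross-composing into a parameterized problem that admits a polynomial coNP-kernel forces $\containment$. Combining this with the polynomial coNP-kernel obtained in the previous step produces $\containment$, contradicting the standing assumption $\ncontainment$ and thereby ruling out the existence of such $\preceq$ and $p$.

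The proof itself is short because the substantive work is packaged inside the two cited results. The main conceptual step is packaged in Lemma~\ref{lemma:efficient:po:yields:coNP:kernel}: one must verify that the nondeterministic machine which enumerates instances preceding $(x,k)$, emitting those of size at most $p(k)$ and falling back on a fixed trivial \yes-instance otherwise, truly meets the coNP-kernel specification. The lower-ideal property ensures that on every \yes-input all generated small instances are \yes-instances, while the obstruction condition guarantees that every \no-input has at least one computation path producing a small \no-instance. Once that lemma and the co-nondeterministic cross-composition result are in hand, the present theorem is a one-step contrapositive of the two ingredients described above.
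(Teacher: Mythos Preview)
Your proposal is correct and matches the paper's own argument essentially verbatim: the paper states that the theorem follows directly from Lemma~\ref{lemma:efficient:po:yields:coNP:kernel} together with the co-nondeterministic variant of Theorem~\ref{theorem:crossCompositionNoKernel} from~\cite[Theorem~2]{KratschPRR12}. Your contrapositive framing and the brief recap of why the coNP-kernel specification is met are exactly the intended reasoning.
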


Theorem~\ref{theorem:crosscomposition:obstruction:bounds} shows that an \orsymb-cross-composition of an NP-hard set into~$\Q$ makes it unlikely that~$\Q$ admits an efficiently generated quasi-order on its instances that characterizes the problem by obstructions of polynomial size. The strength of the theorem comes from the fact that it excludes the existence of \emph{efficiently generated} quasi-orders. As a \emph{polynomial-time} quasi-order that characterizes~$\Q$ by polynomial-size obstructions places~$\Q$ in coNP, no NP-complete problem is characterized by polynomial-size obstructions under a polynomial-time quasi-order, unless \npeqconp.

Applying Theorem~\ref{theorem:crosscomposition:obstruction:bounds} to \kPathwidth, we obtain some interesting information about the properties of the pathwidth measure. While it was already known that the minor-minimal obstructions to pathwidth~$k$ can have size exponential in~$k$, Theorem~\ref{theorem:crosscomposition:obstruction:bounds} shows that \emph{any} efficiently generated quasi-order under which the \yes-instances are closed, must have superpolynomial size obstructions. As many natural quasi-orders on graphs are efficiently generated, this shows that a nice characterization of pathwidth in terms of polynomial-size obstructions is unlikely to exist, for \emph{any} efficiently generated quasi-order.

\section{Conclusion} \label{section:conclusion}
The thesis underlying this paper is that the sizes of problem kernels and the sizes of obstructions in a quasi-order are intimately related, and should be studied together. We gave a general characterization of FPT in terms of problems admitting efficiently testable quasi-orders that characterize \no-instances by obstructions of bounded size. In Sections~\ref{section:or:composition:for:pw} and~\ref{section:crosscomposition:obstruction:bounds} we showed how properties of obstruction sets can be used to derive kernelization bounds, and vice versa. There are various other examples of the strong connection between kernel sizes and obstruction sizes in the literature. We briefly discuss three of them.

1) Obstructions to list-colorability played a crucial role in the analysis of kernels for structural parameterizations of \qColoring by Jansen and Kratsch~\cite{JansenK11b}. They proved that the existence of polynomial kernels for \qColoring, parameterized by a vertex modulator to a graph class~$\F$, is determined by the existence of a bound on the size of obstructions to $q$-list-colorability of graphs in~$\F$.

2) Fomin et al.~\cite{FominLMS12} studied the \FDeletion problem. It asks for a fixed, finite family~$\F$, a graph~$G$, and an integer~$k$, whether~$k$ vertices can be removed from~$G$ to ensure that the remainder does not contain a graph in~$\F$ as a minor. For any~$\F$, the \yes-instances of parameter value at most~$k$ form a minor ideal~$\G_{\F,k}$~\cite[Theorem 6]{FellowsL88a}. Fomin et al.~proved that \FDeletion admits a polynomial kernel for every family~$\F$ that contains a planar graph. A byproduct of their kernel shows~\cite[Theorem 3]{FominLMS12} that for every such~$\F$, there is a polynomial~$p$, such that~$\G_{\F,k}$ is characterized by obstructions of size~$p(k)$.

3) The kernelization lower bound for \kRamsey given by Kratsch~\cite{Kratsch12}, is similar in spirit to Theorem~\ref{theorem:crossCompositionNoKernel}: it composes a sequence of instances of an NP-hard problem by embedding them in a larger host graph, whose size is superpolynomial with respect to the associated parameter value. The employed host graph is related to the Tur\'{a}n graph, which is extremal in Ramsey-settings. 

We conclude with two directions for future research. Can the contrapositive of Corollary~\ref{corollary:kernel:yields:small:obstructions} be used to give kernel lower bounds? Are there problems, whose kernelization complexity is still unknown, for which the nonexistence of an \orsymb-cross-composition can be proven by the contrapositive of Theorem~\ref{theorem:crosscomposition:obstruction:bounds}? We expect a further investigation of the interplay between kernels and obstructions to yield interesting insights into the structure of hard problems.

\textbf{Acknowledgments.} We thank Stefan Kratsch and Rudolf Fleischer for insightful discussions, and Hans L.\, Bodlaender for a simple proof of Lemma~\ref{lemma:improvement:npcomplete}.

\putbib[../Paper]
\end{bibunit}

\vfill
\newpage

\appendix

\begin{bibunit}[alpha]

\section{Proofs for Section \ref{section:characterization}}
\subsection{Proof of Lemma \ref{lemma:obstructions:donot:imply:kernel}}

\begin{relemma}{lemma:obstructions:donot:imply:kernel}
\ThreeColoringByComponentSize does not admit a polynomial kernel unless \containment, but there is a polynomial-time quasi-order on its instances that satisfies statement~\ref{characterization:quasiorder} of Theorem~\ref{theorem:fptCharacterization} with~$f(k) \in \Oh(k^2)$.
\end{relemma}

\begin{proof}
The classical variant of the problem is NP-complete~\cite[GT4]{GareyJ79}. For the kernelization lower bound we use the framework by Bodlaender et al.~\cite{BodlaenderDFH09}. As the disjoint union of a series of graphs is $3$-colorable if and only if all individual graphs are, \ThreeColoringByComponentSize is \andsymb-compositional. Using the mentioned framework~\cite[Lemma 7]{BodlaenderDFH09} and a recent result of Drucker~\cite{Drucker12}, this implies that \ThreeColoringByComponentSize does not admit a polynomial kernel unless \containment. 

We proceed by giving a polynomial quasi-order. Let~$(x_N, k_N)$ be a constant-size \no-instance, for example the graph~$K_4$ with a parameter value of~$4$. Now consider the following polynomial-time quasi-order on instances $(x,k) \in \{0,1\} \times \mathbb{N}$ of this problem, assuming that an instance is encoded by giving the adjacency matrix of the graph in row-major order. Say that~$(x',k') \preceq (x,k)$ if one of the following holds:
\begin{itemize}
	\item $(x',k') = (x,k)$.
	\item $(x',k')$ and~$(x,k)$ are well-formed instances, $k = k'$, and matrix~$x'$ can be formed by restricting the adjacency matrix~$x$ to the entries corresponding to vertices of one of the connected components of the graph encoded by~$x$.
	\item $(x',k') = (x_N, k_N)$ and the instance~$(x,k)$ is not well-formed, because~$x$ is not an adjacency matrix or because the corresponding graph has a connected component of more than~$k$ vertices.
\end{itemize}
It is easy to see that~$\preceq$ is a quasi-order on instances of \ThreeColoringByComponentSize. As the explicit encoding by adjacency matrices avoids the need for graph isomorphism tests when comparing two instances, $\preceq$~can be decided in polynomial time: test for each connected component of the graph encoded by~$x$ whether the relevant restriction yields~$x'$. 

If~$(x',k') \preceq (x,k)$ for well-formed instances~$(x,k)$ and~$(x',k')$, then the graph encoded by~$x'$ is an induced subgraph of the graph encoded by~$x$. Hence the \yes-instances are closed under~$\preceq$, as three-colorability is a hereditary property. Therefore~$\preceq$ satisfies \ref{characterization:lowerideal} of Theorem~\ref{theorem:fptCharacterization}.

To see that \ref{characterization:obstruction} is also satisfied, consider a \no-instance~$(x,k)$. If it is not well-formed, then~$(x_N, k_N) \preceq (x,k)$ which has constant size. If~$(x,k)$ is well-formed then~$x$ encodes a graph of chromatic number at least four, in which every connected component has at most~$k$ vertices. Hence it has a connected component~$C$ on at most~$k$ vertices that is not $3$-colorable. Let~$x'$ be the restriction of matrix~$x$ to the vertices in~$C$. Then~$(x',k)$ is a \no-instance that precedes~$(x,k)$. As~$C$ has at most~$k$ vertices,~$|x'| \leq k^2$ and therefore~$|(x',k)| \in \Oh(k^2)$. Hence there is indeed a quadratic-size obstruction preceding each \no-instance, which concludes the proof.
\qed
\end{proof}

\section{Proofs for Section \ref{section:or:composition:for:pw}} \label{appendix:crosscomposition}

\subsection{Definition of OR-Cross-Composition}
\begin{definition}[{\cite{BodlaenderJK11}}] \label{definition:eqvr}
An equivalence relation~\eqvr on $\Sigma^*$ is called a \emph{polynomial equivalence relation} if the following two conditions hold:
\begin{enumerate}
	\item There is an algorithm that, given two strings~$x,y \in \Sigma^*$, decides whether~$x$ and~$y$ belong to the same equivalence class in~$(|x| + |y|)^{\Oh(1)}$ time.
	\item For any finite set~$S \subseteq \Sigma^*$, the equivalence relation~$\eqvr$ partitions the elements of~$S$ into at most~$(\max _{x \in S} |x|)^{\Oh(1)}$ classes.
\end{enumerate}
\end{definition}
\begin{definition}[{\cite{BodlaenderJK11}}] \label{definition:crossComposition}
Let~$\L \subseteq \Sigma^*$ be a set and let~$\Q \subseteq \Sigma^* \times \mathbb{N}$ be a parameterized problem. We say that~$\L$ \emph{\orsymb-cross-composes} into~$\Q$ if there is a polynomial equivalence relation~$\eqvr$ and an algorithm that, given~$t$ strings~$x_1, x_2, \ldots, x_t$ belonging to the same equivalence class of~$\eqvr$, computes an instance~$(x^*,k^*) \in \Sigma^* \times \mathbb{N}$ in time polynomial in~$\sum _{i=1}^t |x_i|$ such that:
\begin{enumerate}
	\item~$(x^*, k^*) \in \Q \Leftrightarrow x_i \in \L$ for some~$i \in [t]$,
	\item~$k^*$ is bounded by a polynomial in~$\max _{i=1}^t |x_i|+\log t$.
\end{enumerate}
\end{definition}

\subsection{Preliminaries of Path Decompositions}
We use~$K_n$ to denote the complete graph on~$n$ vertices. The disjoint union of~$t$ copies of a graph~$G$ is~$t \cdot G$. The \emph{join} of two graphs~$G_1$ and~$G_2$ is the graph~$G_1 \otimes G_2$ on the vertex set~$V(G_1) \dotcup \linebreak[1] V(G_2)$ and edge set~$E(G_1) \dotcup E(G_2) \dotcup \linebreak[0] \{ \{x,y\} \mid x \in V(G_1) \wedge y \in V(G_2) \}$. We need the following properties of path decompositions.

\begin{lemma}[\cite{BodlaenderM93}]\label{lemma:treewidth:join}Any two graphs~$G_1$ and~$G_2$ satisfy 
\[
\pw(G_1\otimes G_2)=\min(\pw(G_1)+|V(G_2)|,\pw(G_2)+|V(G_1)|).
\]
\end{lemma}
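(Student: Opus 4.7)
The plan is to prove both directions of the equality. For the upper bound, without loss of generality suppose the minimum on the right is $\pw(G_1)+|V(G_2)|$; I would take an optimal path decomposition $(B_1, \ldots, B_r)$ of $G_1$ and output $(B_1 \cup V(G_2), \ldots, B_r \cup V(G_2))$. Every edge of $G_1$ is realized where it was, every edge of $G_2$ and every join edge is realized in every bag, and convexity is preserved because each $v \in V(G_2)$ occurs in all bags. The resulting width is exactly $\pw(G_1)+|V(G_2)|$; symmetry yields the other side of the minimum.

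For the lower bound I would take any path decomposition $\mathcal{P}=(X_1,\ldots,X_r)$ of $G_1 \otimes G_2$ of width $w$ and let $I_z=[l_z,r_z]$ denote the contiguous range of bag indices in which vertex $z$ appears. The first key step is that \emph{either some bag contains all of $V(G_1)$, or some bag contains all of $V(G_2)$}. If neither held then, by Helly's property for intervals, there would exist $u_1,u_2 \in V(G_1)$ with $I_{u_1}\cap I_{u_2}=\emptyset$ and $v_1,v_2 \in V(G_2)$ with $I_{v_1}\cap I_{v_2}=\emptyset$; a short case analysis on the relative order of these four intervals on the line then produces a pair $(u,v) \in V(G_1)\times V(G_2)$ with $I_u \cap I_v = \emptyset$, contradicting the existence of the join edge $\{u,v\}$.

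Without loss of generality some bag contains all of $V(G_1)$, so $[p_{\min},p_{\max}]:=\bigcap_{u \in V(G_1)} I_u$ is a nonempty contiguous range of indices. I would then check that every $v \in V(G_2)$ satisfies $I_v \cap [p_{\min},p_{\max}] \neq \emptyset$: if $I_v$ lay entirely to the left of this range, the vertex $u^* \in V(G_1)$ with $l_{u^*}=p_{\min}$ would have $I_{u^*}$ disjoint from $I_v$, contradicting adjacency in the join, and symmetrically on the right. The crucial observation is then that the restriction $(X_{p_{\min}} \cap V(G_2),\ldots,X_{p_{\max}} \cap V(G_2))$ is itself a path decomposition of $G_2$: vertex coverage follows from the preceding claim, convexity is immediate since each $I_v \cap [p_{\min},p_{\max}]$ is an interval, and for every edge $\{v,v'\} \in E(G_2)$ the three intervals $I_v$, $I_{v'}$ and $[p_{\min},p_{\max}]$ pairwise intersect and hence share a common point by Helly, so the edge is realized inside the range.

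Consequently some bag $X_{k^*}$ with $k^* \in [p_{\min},p_{\max}]$ has $|X_{k^*} \cap V(G_2)| \geq \pw(G_2)+1$; combined with $V(G_1) \subseteq X_{k^*}$ this yields $w+1 \geq |V(G_1)|+\pw(G_2)+1$, so $w \geq \pw(G_2)+|V(G_1)|$. The symmetric case ``some bag contains all of $V(G_2)$'' gives $w \geq \pw(G_1)+|V(G_2)|$, and in either case $w$ is at least the claimed minimum. The main obstacle is this restriction step: the naive hope that a bag maximizing $|X_k \cap V(G_1)|$ also contains all of $V(G_2)$ is false in general, and the right move is to work inside the contiguous strip where $V(G_1)$ is fully present and show that the induced slice forms a path decomposition of $G_2$.
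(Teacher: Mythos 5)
The paper does not prove this lemma; it imports it verbatim from Bodlaender and M\"{o}hring~\cite{BodlaenderM93} and uses it as a black box in the proof of Lemma~\ref{lemma:improvement:npcomplete}. So there is no in-paper argument to compare against, and your proof has to stand on its own. It does: the construction for the upper bound is correct, and the lower-bound argument is sound.

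The only step that deserves a second look is the claim that, in any path decomposition of~$G_1 \otimes G_2$, some bag contains all of~$V(G_1)$ or some bag contains all of~$V(G_2)$. Your reduction to four intervals is exactly right: if neither side has a common bag, Helly gives disjoint $I_{u_1}, I_{u_2}$ on one side and disjoint $I_{v_1}, I_{v_2}$ on the other; ordering them as $r(I_{u_1})<l(I_{u_2})$ and $r(I_{v_1})<l(I_{v_2})$, the join edges $\{u_1,v_2\}$ and $\{u_2,v_1\}$ force $l(I_{v_2})\le r(I_{u_1})$ and $l(I_{u_2})\le r(I_{v_1})$, which chains to $l(I_{u_2})\le r(I_{v_1})<l(I_{v_2})\le r(I_{u_1})<l(I_{u_2})$, a contradiction. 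Note that this is genuinely stronger than the clique-containment fact (Lemma~\ref{lemma:cliquecontainment}) since $V(G_1)$ need not be a clique in the join; what you are really using is that every vertex of one side is a neighbour of every vertex of the other, so ``bipartite Helly'' forces one side to be contained in a bag. The rest --- restricting to the strip $[p_{\min},p_{\max}]$ where all of $V(G_1)$ is present, verifying the slice is a path decomposition of $G_2$ via Helly for the triples $I_v, I_{v'}, [p_{\min},p_{\max}]$, and adding $|V(G_1)|$ to the resulting bag size --- is correct and is essentially the standard proof of this identity. You also correctly flag that naively picking a single ``best'' bag does not work and that one must argue over the whole strip.

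If you want to tighten the write-up, you could observe that the degenerate cases $V(G_1)=\emptyset$ or $V(G_2)=\emptyset$ make the intersection $\bigcap_{u\in V(G_1)}I_u$ the full index range (empty intersection convention), under which the restriction and counting steps still go through; but this is cosmetic, not a gap.
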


\begin{lemma}[\cite{BodlaenderM93}] \label{lemma:cliquecontainment} 
If~$S$ is a clique in graph~$G$, then any path decomposition of~$G$ has a bag containing~$S$.
\end{lemma}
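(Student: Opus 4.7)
The plan is to prove the statement by combining the convexity property of path decompositions with the classical Helly property for intervals on the line. For each vertex $v \in V(G)$, let $I_v \subseteq [r]$ be the set of indices of the bags $\X_i$ that contain $v$. By convexity (property (iii) in the definition of a path decomposition), each $I_v$ is a nonempty contiguous interval of $[r]$.

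Next I would use the clique assumption to show that the intervals $\{I_v\}_{v \in S}$ pairwise intersect. Take any two vertices $u,v \in S$. Since $S$ is a clique, $\{u,v\} \in E(G)$, so by property (ii) there exists a bag $\X_i$ with $\{u,v\} \subseteq \X_i$; this index $i$ lies in $I_u \cap I_v$, so $I_u \cap I_v \neq \emptyset$.

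Finally I would invoke the Helly property for intervals on a line: any finite collection of pairwise intersecting intervals has a common point. The one-line justification I would include is that if $a_v := \min I_v$ and $b_v := \max I_v$, then pairwise intersection gives $a_u \leq b_v$ for every pair $u,v \in S$, hence $\max_{v \in S} a_v \leq \min_{v \in S} b_v$, and any integer $i$ in this (nonempty) range lies in every $I_v$. For such an $i$, the bag $\X_i$ contains every $v \in S$, i.e.\ $S \subseteq \X_i$, completing the proof.

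There is no real obstacle here; the only subtlety is making sure the Helly step is stated for the finite family indexed by $S$, which is immediate from the linear order on $[r]$. The proof uses only the two relevant axioms of a path decomposition (edge realization and convexity) and makes no appeal to the width bound.
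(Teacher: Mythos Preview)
Your proof is correct and is precisely the standard argument via the Helly property for intervals. The paper does not actually prove this lemma; it is stated with a citation to Bodlaender and M\"ohring and used as a black box. In fact, the paper invokes exactly your Helly-for-intervals reasoning elsewhere (in the proof of Lemma~\ref{lemma:pathwidth:inflation}), so your approach is fully in line with the tools the paper takes for granted.
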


\begin{lemma}[{Compare~\cite[Theorem 29]{Bodlaender98}}] \label{lemma:pathwidth:interval:supergraphs}
Any graph~$G$ satisfies 
\[
\pw(G) - 1 = \min \{ \omega(H) - 1 \mid \mbox{H is an interval graph and~$G \subseteq H$\}}.
\]
\end{lemma}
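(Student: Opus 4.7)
The plan is to establish the equality by proving both inequalities separately, using a correspondence between path decompositions of $G$ and interval supergraphs of $G$.

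For the $\leq$ direction, I would show that any interval supergraph $H$ of $G$ yields a path decomposition of $G$ of width $\omega(H)-1$. Given an interval representation $\{I_v\}_{v \in V(H)}$ of $H$, I scan the real line from left to right and, at each ``event point'' where some interval starts or ends, record the set of currently active vertices as a bag. The resulting sequence of bags is a valid path decomposition of $H$: every edge $\{u,v\} \in E(H)$ is realized at any time when $I_u \cap I_v$ is nonempty, and each vertex $v$ appears in exactly those bags corresponding to times inside $I_v$, which form a contiguous subsequence. Each bag consists of pairwise-overlapping intervals, hence is a clique of $H$, so has size at most $\omega(H)$. Since $E(G) \subseteq E(H)$ and $V(G) = V(H)$, this same sequence is a path decomposition of $G$, giving $\pw(G) \leq \omega(H)-1$.

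For the $\geq$ direction, I would take an optimal path decomposition $(\X_1, \ldots, \X_r)$ of $G$ of width $\pw(G)$ and build an interval supergraph $H$ whose clique number is at most $\pw(G)+1$. Concretely, let $H$ be the graph on $V(G)$ where $\{u,v\} \in E(H)$ iff some bag $\X_i$ contains both. Every edge of $G$ is in $H$ by the decomposition axioms, so $G \subseteq H$. To see that $H$ is an interval graph, for each $v \in V(G)$ assign the interval $[\ell_v, r_v]$ where $\ell_v$ and $r_v$ are the smallest and largest indices of bags containing $v$; by the convexity property of path decompositions, $v$ lies in exactly the bags $\X_{\ell_v}, \ldots, \X_{r_v}$. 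Two intervals $[\ell_u,r_u]$ and $[\ell_v,r_v]$ intersect iff $u$ and $v$ share a bag iff $\{u,v\} \in E(H)$, so this is indeed an interval representation of $H$. Finally, by Lemma~\ref{lemma:cliquecontainment} applied to $H$ and its path decomposition $(\X_1,\ldots,\X_r)$ (which remains valid for $H$ since every edge of $H$ was added because it is realized by some bag), any clique of $H$ is contained in some $\X_i$, so $\omega(H) \leq \max_i |\X_i| = \pw(G)+1$.

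Combining the two directions yields the claimed equality. The routine but slightly delicate step is verifying that the construction in the $\geq$ direction produces a genuine interval representation, for which the convexity property of path decompositions is exactly the needed ingredient; the $\leq$ direction is an essentially standard scan-line argument. No part of the proof presents a substantial obstacle, since this is a classical characterization and the lemmata on joins, cliques in decompositions, and the scan-line construction carry it through directly.
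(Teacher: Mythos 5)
Your argument is correct and is the standard proof of this classical characterization: the paper itself gives no proof, citing Bodlaender's survey instead, and your two-direction argument (a scan-line construction giving $\pw(G) \leq \omega(H) - 1$ from any interval supergraph $H$, and the bag-to-interval construction combined with Lemma~\ref{lemma:cliquecontainment} giving an interval supergraph with clique number at most $\pw(G)+1$) is exactly the expected one. The one thing you should flag explicitly is a mismatch with the printed statement. What you actually prove is $\pw(G) = \min\{\omega(H) - 1 \mid H \mbox{ interval}, G \subseteq H\}$, whereas the lemma as printed has a spurious extra $-1$ on the left-hand side, reading $\pw(G) - 1 = \min\{\omega(H) - 1\}$, which would simplify to the false claim $\pw(G) = \min \omega(H)$ --- already wrong for $G = K_n$, which is its own interval supergraph with $\omega(K_n) = n$ while $\pw(K_n) = n - 1$. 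This is a typo in the paper: every subsequent application of the lemma in the cross-composition proof (deducing $\omega(H') \leq k'+1$ from $\pw(G') \leq k'$, deducing $\pw(\tobs^i \diamond k) \leq k(s+2)-2$ from $\omega(H'') \leq k(s+2)-1$, and deducing $\pw(G_{i^*}) \leq k-2$ from a clique number at most $k-1$) is consistent with the version you proved, not with the printed one. You silently correct the statement and then assert that your two bounds yield ``the claimed equality''; it would be cleaner to note that the printed lemma has a stray $-1$ on the left rather than leave the discrepancy unremarked.
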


\subsection{Proof of Lemma \ref{lemma:improvement:npcomplete}}

\begin{relemma}{lemma:improvement:npcomplete}
\PathwidthImprovement is NP-complete.
\end{relemma}

\begin{proof}
Membership in NP is trivial. To establish completeness we reduce from the NP-complete~\cite{ArnborgCP87} \Pathwidth problem. An instance of \Pathwidth consists of a graph~$G$ and integer~$k \geq 0$. The question is whether~$\pw(G) \leq k$. Let~$n := |V(G)|$. If~$k \geq n - 1$ then the answer is trivially \yes and we output a constant-size \yes-instance. In the remainder let~$i := n - k - 1$, which is at least one. Construct the graph~$G' := G \otimes (i \cdot K_1)$, i.e., the join of~$G$ and an independent set on~$i$ vertices. Construct a path decomposition~$\P'$ for~$G'$ by making~$i$ consecutive bags; each bag contains~$V(G)$ and exactly one vertex from the independent set~$i \cdot K_1$. It is easy to verify that this constitutes a valid path decomposition of~$G'$. 

Let~$k' := n + 1$. As each bag of~$\P'$ contains~$n + 1$ vertices, the width of~$\P'$ is~$k' - 1$. Hence~$(G', k', \P')$ is a valid instance of \PathwidthImprovement, which asks whether~$G'$ has pathwidth at most~$k' - 2$. Using Lemma~\ref{lemma:treewidth:join} together with the fact that the edgeless graph~$i \cdot K_1$ has pathwidth zero, it easily follows that~$G$ has pathwidth at most~$k$ if and only if~$G'$ has pathwidth at most~$k'-2$. As the instance~$(G', k', \P')$ of \PathwidthImprovement can be constructed in polynomial time, this concludes the proof.
\qed
\end{proof}

\subsection{Proof of Lemma \ref{lemma:ternarytrees:obstructions}}

\begin{relemma}{lemma:ternarytrees:obstructions}
For~$k \in \mathbb{N}_0$ the graph~$\tobs^k$ is a minor-minimal obstruction to pathwidth~$k$.
\end{relemma}

\begin{proof}
It is known that if~$G_1, G_2, G_3$ are acyclic, connected, minor-minimal obstructions to pathwidth~$k$, then any graph that can be formed from the disjoint union~$G_1 \dotcup G_2 \dotcup G_3$ by adding a new vertex that is made adjacent to exactly one vertex in each of the components~$G_1, G_2, G_3$, is a minor-minimal obstruction to pathwidth~$k+1$. This was proven independently by Kinnersley~\cite[Theorem 4.3]{Kinnersley92} and Takahashi et al.~\cite[Theorem 2.5]{TakahashiUK94}. Observe that~$K_2 = \tobs^0$ is the unique acyclic, connected, minor-minimal obstruction to pathwidth zero. As~$\tobs^k$ can be built by connecting a new vertex to the roots of three copies of~$\tobs^{k-1}$, the lemma follows from the cited results by induction.
\qed
\end{proof}

\subsection{Proof of Lemma \ref{lemma:pathwidth:inflation}}
\begin{relemma}{lemma:pathwidth:inflation}
For any graph~$G$ and~$k \in \mathbb{N}: \pw(G \diamond k) + 1 = k \cdot (\pw(G) + 1)$.
\end{relemma}
\begin{proof}
The lemma can be considered folklore; we give a proof for completeness. Given a path decomposition for~$G$, one may replace each occurrence of a vertex~$v$ by the copies~$v_1, \ldots, v_k$ to obtain a path decomposition of~$G \diamond k$. The size of each bag is multiplied by~$k$. Hence a path decomposition of width~$\pw(G)$, whose bags have size at most~$\pw(G) + 1$, is transformed into a decomposition of~$G \diamond k$ of maximum bag size at most~$k \cdot (\pw(G) + 1)$. The width of such a decomposition is at most~$k \cdot (\pw(G) + 1) - 1$, implying that~$\pw(G \diamond k) + 1 \leq k \cdot (\pw(G) + 1)$.

For the other direction, we first show that there is a minimum-width decomposition of~$G \diamond k$ where for each~$v \in V(G)$, each bag of the decomposition either contains all of~$v_1, \ldots, v_k$ or none of them. Consider an arbitrary path decomposition~$\P' = (\X'_1, \ldots, \X'_r)$ of~$G \diamond k$. Define a decomposition~$\P'' = (\X''_1, \ldots, \X''_r)$ as follows: for each vertex~$v \in V(G)$, if~$\{v_1, \ldots, v_k\} \subseteq \X'_i$ then~$\{v_1, \ldots, v_k\}$ is added to~$\X''_i$. Then~$\P''$ trivially has the desired form, while its width is not greater than the width of~$\P'$. It remains to prove that~$\P''$ is a valid path decomposition.

For each vertex~$v \in V(G)$, the set~$v_1, \ldots, v_k$ is a clique in~$G'$. By Lemma~\ref{lemma:cliquecontainment} it follows that~$\P'$ has a bag~$\X'_i$ containing all vertices~$v_1, \ldots, v_k$. It follows that~$\X''_i$ also contains those vertices. It is easy to verify that the convexity property is not violated. It remains to show that all edges are represented. For~$v \in V(G)$ and~$i,j \in [k]$, the edge~$\{v_i, v_j\}$ is represented in a bag that contains the clique~$\{v_1, \ldots, v_k\}$. It remains to show that for each edge~$\{u,v\}$ of~$G$, all edges between copies of~$u$ and~$v$ are represented in~$\P''$. Observe the following. For each choice of~$x,y \in \{u,v\}$ (possibly~$x=y$) and~$i,j \in [k]$ (possibly~$i,j$), either~$x_i = y_j$ or there is an edge~$\{x_i, y_j\}$ in~$G \diamond k$ that decomposition~$\P'$ represents in some bag. Hence in~$\P'$, the interval of bags that contain~$x_i$, intersects the interval of bags that contain~$y_j$, for all valid choices of~$x,y$ and~$i,j$. By the Helly property\footnote{A family of sets~\F has the \emph{Helly property} if every subfamily~$\F' \subseteq \F$ with~$\forall S,T \in \F': S \cap T \neq \emptyset$ satisfies~$\bigcap _{S \in \F'} S \neq \emptyset$. It is well-known that the set of intervals on the real line has the Helly property.} for closed intervals on the real line, it follows that the common intersection of all these intervals is nonempty. Hence there is a bag in~$\P'$ containing~$\{u_1, \ldots, u_k, v_1, \ldots, v_k\}$. The corresponding bag in~$\P''$ realizes the edges between the copies of~$u$ and the copies of~$v$, which proves that~$\P''$ is indeed a valid path decomposition of~$G \diamond k$.

Now we can prove that~$k \cdot (\pw(G) + 1) \leq \pw(G \diamond k) + 1$. Consider a minimum-width path decomposition~$\P'$ of~$G \diamond k$ where, for each~$v \in V(G)$, each bag contains either all copies of~$v$ or none. Removing all copies with index two or higher from the decomposition results in a path decomposition of a graph isomorphic to~$G$. Since the number of copies of a vertex in each bag drops by a factor~$k$, the maximum size of a bag is decreased by a factor~$k$. As the width of a decomposition is one less than the maximum bag size, this implies that~$k \cdot (\pw(G) + 1) \leq \pw(G \diamond k) + 1$.
\qed
\end{proof}

\subsection{Proof of Theorem \ref{theorem:pathwidth:crosscomposition}}
\begin{retheorem}{theorem:pathwidth:crosscomposition}
The \PathwidthImprovement problem \orsymb-cross-composes into \kPathwidth.
\end{retheorem}
\begin{proof}
Choose an equivalence relation~\eqvr under which any two strings that do not encode valid instances of \PathwidthImprovement are equivalent. Let two well-formed instances be equivalent if they ask for the same pathwidth bound. It is easy to verify that this is a polynomial equivalence relation. We show how to cross-compose instances of \PathwidthImprovement that are equivalent under~\eqvr. If the instances are malformed, then we may output a constant-size \no-instance. In the remainder, we may therefore assume that the input consists of a sequence~$(G_1, k_1, \P^1), \ldots, \linebreak[0] (G_t, k_t, \P^t)$ such that~$k_1 = \ldots = k_t = k$. The instance with index~$i \in [t]$ asks whether~$G_i$ has pathwidth at most~$k-2$, and supplies a path decomposition~$\P^i$ of~$G_i$ having width at most~$k-1$. By duplicating some inputs, we can enforce that~$t$ is a power of three and~$t \geq 3$; this does not change \orsymb of the answers to the input instances, while at most increasing the input size by a factor three. So let~$t$ be equal to~$3^s$ for some~$s \in \mathbb{N}$.

The construction is based on the minor-minimal obstruction~$\tobs^s$. Label the $3^s = t$ leaves of~$\tobs^s$ as~$x^1, \ldots, x^t$, and let~$y^1, \ldots, y^t$ be the parents of those leaves. As~$s \geq 1$ each vertex~$y_i$ has degree exactly two in~$\tobs^s$. We cross-compose the instances into a single graph~$G'$. It is obtained by inflating~$\tobs^s$ by a factor~$k$ and replacing each $k$-vertex clique containing the copies of a leaf~$x_i$ by the graph~$G_i$. More formally, we obtain~$G'$ as follows.
\begin{itemize}
	\item Initialize~$G'$ as the inflation~$\tobs^s \diamond k$. For each leaf~$x^i$ the copies created by the inflation form a clique of size~$k$ on vertices~$x^i_1, \ldots, x^i_k$.
	\item For each~$i \in [t]$, remove the vertices~$x^i_1, \ldots, x^i_k$ from~$G'$ and replace them by a copy of the graph~$G_i$. Make all vertices of~$G_i$ adjacent to the copies of the parent of~$x^i$, i.e., to the vertices~$y^i_1, \ldots, y^i_k$.
\end{itemize}
Refer to Fig.~\ref{img:crosscomposition} for an example. Let~$k' := k (s+2) - 2 \in \Oh(n \cdot \log t)$.

\begin{claim}
$\pw(G') \leq k'$ if and only if there is an~$i \in [t]$ such that~$\pw(G_i) \leq k - 2$.
\end{claim}
\begin{proof}
($\Rightarrow$) Assume that~$G'$ has pathwidth at most~$k'$. By Lemma~\ref{lemma:pathwidth:interval:supergraphs} there is an interval graph~$H'$ that is a supergraph of~$G'$, such that~$\omega(H') \leq k' +  1$. We show that there is an index~$i^* \in [t]$ such that the induced subgraph~$H'[V(G_{i^*})]$ has clique number at most~$k-1$. 

To see this, suppose that no such index exists. For each~$i \in [t]$ let~$S_i$ be a clique of size~$k$ in~$H'[V(G_i)]$. Consider the graph~$H''$ obtained from~$H'$ by removing the vertices~$V(G_i) \setminus S_i$ for each~$i \in [t]$; from each graph that was plugged into~$\tobs^i \diamond k$, we only leave a single clique. As interval graphs are hereditary,~$H''$ is an interval graph. Since~$H'' \subseteq H'$ we have~$\omega(H'') \leq \omega(H') \leq k'+1$. Now observe that~$H''$ is a supergraph of~$\tobs^i \diamond k$: when constructing~$G'$ we replaced the inflated size-$k$ cliques corresponding to the leaves of~$\tobs^i \diamond k$ by the input graphs, joined to all copies of the parent vertices of that leaf, but we reduce the vertex sets of the plugged-in graphs~$G_i$ back to size-$k$ cliques. But then~$H''$ is an interval supergraph of~$\tobs^i \diamond k$ of clique number at most~$k'+1 = k (s+2) - 1$, implying by Lemma~\ref{lemma:pathwidth:interval:supergraphs} that~$\pw(\tobs^i \diamond k) \leq k (s+2) - 2$. But this contradicts the fact that~$\pw(\tobs^i \diamond k)$ has pathwidth at least~$k (s+2) - 1$, which follows from Lemma~\ref{lemma:ternarytrees:obstructions} and Lemma~\ref{lemma:pathwidth:inflation}.

Thus there must indeed by an index~$i^* \in [t]$ such that~$H'[V(G_{i^*})]$ has clique number at most~$k-1$. As that graph is an interval supergraph of~$G_{i^*}$, this shows by Lemma~\ref{lemma:pathwidth:interval:supergraphs} that~$G_{i^*}$ has pathwidth at most~$k-2$, proving this direction of the equivalence.

($\Leftarrow$) For the reverse direction, suppose that~$i^* \in [t]$ such that~$\pw(G_{i^*}) \leq k - 2$. Let~$\P^*$ be a path decomposition of~$G_{i^*}$ whose bags have size at most~$k-1$. Since~$\tobs^s$ is a minor-minimal obstruction to pathwidth~$s$, its proper minor~$\tobs^s - x_{i^*}$ has pathwidth at most~$s$. By Lemma~\ref{lemma:pathwidth:inflation} this implies that the inflation~$(\tobs^s - x_{i^*}) \diamond k$ has pathwidth at most~$k (s+1) - 1$. Consider a path decomposition~$\P'$ of~$(\tobs^s - x_{i^*}) \diamond k$ whose maximum bag size is bounded by~$k (s+1)$. We will  transform~$\P'$ into a path decomposition of~$G'$ of width at most~$k'$.

We first show how to incorporate the graphs~$G_i$ into~$\P'$ for~$i \in [t] \setminus \{i^*\}$. For each such~$i$, do the following. Consider the leaf~$x^i$ of the obstruction~$\tobs^s$, with the corresponding parent~$y^i$. By the existence of the edge~$\{x^i, y^i\}$ in the obstruction, the copies of~$x^i$ and~$y^i$ together form a clique in the inflated graph~$(\tobs^s - x_{i^*}) \diamond k$. Hence by Lemma~\ref{lemma:cliquecontainment} there is a bag~$\X'_j$ of the decomposition~$\P'$ that contains all copies of~$x^i$ and all copies of~$y^i$. We now splice the width-$(k-1)$ path decomposition~$\P^i = (\X^i_1, \ldots, \X^i_r)$ of~$G_i$ into the decomposition~$\P'$, as follows. Remove all copies of vertex~$x^i$ from the decomposition; this decreases the size of bag~$\X'_j$ by~$k$. Make~$r-1$ copies of bag~$\X'_j$ and insert them just after~$\X'_j$. Now add the contents of bag~$\X^i_1$ to~$\X'_j$, add~$\X^i_2$ to~$\X'_{j+1}$, and so on. Each of the new bags is thus obtained from~$\X'_j$ by removing the~$k$ copies of~$x^i$ and replacing them by the contents of a bag in the decomposition~$\P^i$. The latter decomposition has bags of size at most~$k$. Hence after the replacement, the maximum bag size is still at most~$k (s+1)$.

The updated decomposition correctly represents the edges in the graph~$G_i$, together with its edges to the copies of~$y^i$: 
\begin{itemize}
	\item The edges of~$G_i$ are represented in the decomposition~$\P^i$ that we spliced in.
	\item The edges between~$V(G_i)$ and the copies of~$y^i$ are represented because~$\X'_j$ contains all copies of~$y^i$, while vertices of~$V(G_i)$ only occur in bags that are copies of~$\X'_j$.
\end{itemize}
By independently splicing in the path decompositions of all input graphs except for~$G_{i^*}$, we obtain a decomposition of the graph~$G' - V(G_{i^*})$ whose bags have size at most~$k (s+1)$.

It remains to incorporate the vertices of~$G_{i^*}$ in the decomposition. Observe that the copies of~$y^{i^*}$, which are adjacent to all vertices of~$V(G_{i^*})$ in~$G'$, form a clique in~$G' - V(G_{i^*})$. By Lemma~\ref{lemma:cliquecontainment} the decomposition resulting from the previous step therefore has a bag~$\X'_{j^*}$ containing all copies of~$y^{i^*}$. Recall that~$\P^*$ is a path decomposition of~$G_{i^*}$ whose bags have size at most~$k-1$. Let~$\P^* = (\X^*_1, \ldots, \X^*_{r^*})$. Make~$r^*-1$ copies of bag~$\X'_{j^*}$, inserting them right after~$\X'_{j^*}$. Add the contents of~$\X^*_1$ to~$\X'_{j^*}$, add~$\X^*_2$ to~$\X'_{j^*+1}$, and so on, until the entire decomposition~$\P^*$ is incorporated into~$\P'$. We add bags of size at most~$k-1$ to a decomposition whose bags had size at most~$k (s+1)$. Hence the bags in the final decomposition~$\P'$ have size at most~$k (s+2) - 1$. It follows that~$\P'$ has width at most~$k (s+2) - 2$. To see that~$\P'$ is a valid path decomposition of~$G'$, observe that all edges of~$G'[V(G_{i^*})]$ are represented because we spliced in a decomposition of~$G'$. The edges between~$V(G_{i^*})$ and the copies of the parent~$y^{i^*}$ are represented because all copies of~$y^{i^*}$ are present in all bags that we splice~$\P^*$ into. Thus~$G'$ has pathwidth at most~$k (s+2) - 2 = k'$.
\claimqed
\end{proof}
The claim shows that the cross-composed instance~$(G', k')$ acts as the \orsymb of the inputs. As~$G'$ can be built in polynomial time and~$k'$ is bounded by a polynomial in the size of the largest input instance plus~$\log t$, this concludes the proof.
\qed
\end{proof}

\section{Proofs for Section \ref{section:crosscomposition:obstruction:bounds}}
\subsection{Proof of Lemma \ref{lemma:efficient:po:yields:coNP:kernel}}

\begin{relemma}{lemma:efficient:po:yields:coNP:kernel}
Let~$\Q \subseteq \Sigma^* \times \mathbb{N}$ be a parameterized problem. If there is a polynomial~$p \colon \mathbb{N} \to \mathbb{N}$ and an efficiently generated quasi-order~$\preceq$ such that:
\begin{enumerate}[label=\alph*.,ref=(\alph{enumii})]
	\item $\Q$ is a lower ideal under~$\preceq$, and
	\item for any~$(x,k) \not \in \Q$ there is an \emph{obstruction}~$(x',k') \not \in \Q$ of size at most~$p(k)$ with~$(x',k') \preceq (x,k)$,
\end{enumerate}
then~$\Q$ has a coNP-kernel of size~$p(k) + \Oh(1)$.
\end{relemma}

\begin{proof}
If~$\Q$ only has \no-instances, then the algorithm that always outputs~$(\epsilon, 1)$ trivially satisfies the requirements ($\epsilon$ denotes the empty string). Otherwise, let~$(x_Y, k_Y)$ be a \yes-instance. On input~$(x,k)$, the coNP-kernelization proceeds as follows. It simulates the NDTM that efficiently generates~$\preceq$ on~$(x,k)$. In each computation path, after the generating algorithm outputs an element~$(x',k') \preceq (x,k)$, the kernel tests whether~$|(x',k')| \leq p(k)$. If this is the case then~$(x',k')$ is used as output; otherwise~$(x_Y, k_Y)$ is used as output. 

Each computation path of the nondeterministic procedure terminates in polynomial time (as~$\preceq$ is efficiently generated) and outputs an instance of size at most~$\max(p(k), |(x_Y, k_Y)|) \leq p(k) + \Oh(1)$. To see that the procedure satisfies the nondeterministic form of correctness required from a coNP-kernel, consider the behavior on some input instance~$(x,k)$. If~$(x,k)$ is a \yes-instance, then the outputs either precede~$(x,k)$ (implying they are \yes-instances by~\ref{coNP:kernel:lowerideal}) or equal the trivial \yes-instance~$(x_Y, k_Y)$. Hence all computation paths output small \yes-instances.

Now assume that~$(x,k)$ is a \no-instance. By~\ref{coNP:kernel:obstruction} there is a \no-instance~$(x',k')$ of size at most~$p(k)$ that precedes~$(x,k)$. By definition of an efficiently generated quasi-order, at least one computation path generates~$(x',k')$. As it is sufficiently small, that instance is used as the output to the coNP-kernel. Hence at least one computation path outputs a \no-instance, as required.
\qed
\end{proof}

\section{Bibliography for the Appendix}
In this section we list the bibliographic information for items which were referenced in the appendix, but not in the main text.

\putbib[../Paper]
\end{bibunit}


\begin{thebibliography}{10}

\bibitem{Bodlaender09}
H.~L. Bodlaender.
\newblock Kernelization: New upper and lower bound techniques.
\newblock In {\em Proc. 4th {IWPEC}}, pages 17--37, 2009.

\bibitem{BodlaenderDFH09}
H.~L. Bodlaender, R.~G. Downey, M.~R. Fellows, and D.~Hermelin.
\newblock On problems without polynomial kernels.
\newblock {\em J. Comput. Syst. Sci.}, 75(8):423--434, 2009.

\bibitem{BodlaenderJK11}
H.~L. Bodlaender, B.~M.~P. Jansen, and S.~Kratsch.
\newblock Cross-composition: A new technique for kernelization lower bounds.
\newblock In {\em Proc. 28th STACS}, pages 165--176, 2011.

\bibitem{CattellDDFL00}
K.~Cattell, M.~J. Dinneen, R.~G. Downey, M.~R. Fellows, and M.~A. Langston.
\newblock On computing graph minor obstruction sets.
\newblock {\em Theor. Comput. Sci.}, 233:107--127, 2000.

\bibitem{Dinneen97}
M.~J. Dinneen.
\newblock Too many minor order obstructions.
\newblock {\em J. UCS}, 3:1199--1206, 1997.

\bibitem{DinneenCF01}
M.~J. Dinneen, K.~Cattell, and M.~R. Fellows.
\newblock Forbidden minors to graphs with small feedback sets.
\newblock {\em Discrete Math.}, 230(1-3):215--252, 2001.

\bibitem{DinneenL07}
M.~J. Dinneen and R.~Lai.
\newblock Properties of vertex cover obstructions.
\newblock {\em Discrete Math.}, 307(21):2484--2500, 2007.

\bibitem{DowneyF99}
R.~Downey and M.~R. Fellows.
\newblock {\em Parameterized Complexity}.
\newblock Monographs in Computer Science. Springer, New York, 1999.

\bibitem{Drucker12}
A.~Drucker.
\newblock New limits to classical and quantum instance compression.
\newblock In {\em Proc. 53rd FOCS}, pages 609--618, 2012.

\bibitem{FellowsL88a}
M.~R. Fellows and M.~A. Langston.
\newblock Nonconstructive tools for proving polynomial-time decidability.
\newblock {\em J. ACM}, 35(3):727--739, 1988.

\bibitem{FellowsL89}
M.~R. Fellows and M.~A. Langston.
\newblock An analogue of the {M}yhill-{N}erode theorem and its use in computing
  finite-basis characterizations (extended abstract).
\newblock In {\em Proc. 30th FOCS}, pages 520--525, 1989.

\bibitem{FlumG06}
J.~Flum and M.~Grohe.
\newblock {\em Parameterized Complexity Theory}.
\newblock Springer-Verlag New York, Inc., 2006.

\bibitem{FominLMS12}
F.~V. Fomin, D.~Lokshtanov, N.~Misra, and S.~Saurabh.
\newblock Planar $\mathcal{F}$-{D}eletion: Approximation, kernelization and
  optimal {FPT} algorithms.
\newblock In {\em Proc. 53rd FOCS}, pages 470--479, 2012.

\bibitem{JansenK11b}
B.~M.~P. Jansen and S.~Kratsch.
\newblock Data reduction for graph coloring problems.
\newblock In {\em Proc. 18th FCT}, pages 90--101, 2011.

\bibitem{Kinnersley92}
N.~G. Kinnersley.
\newblock The vertex separation number of a graph equals its path-width.
\newblock {\em Inf. Process. Lett.}, 42(6):345--350, 1992.

\bibitem{Kratsch12}
S.~Kratsch.
\newblock Co-nondeterminism in compositions: a kernelization lower bound for a
  ramsey-type problem.
\newblock In {\em Proc. 23rd SODA}, pages 114--122, 2012.

\bibitem{KratschPRR12}
S.~Kratsch, M.~Pilipczuk, A.~Rai, and V.~Raman.
\newblock Kernel lower bounds using co-nondeterminism: Finding induced
  hereditary subgraphs.
\newblock In {\em Proc. 13th SWAT}, pages 364--375, 2012.

\bibitem{KratschW11z}
S.~Kratsch and M.~Wahlstr\"om.
\newblock The {AND}-conjecture may be necessary.
\newblock In {\em Parameterized Complexity Newsletter}. FPT Wiki, November
  2011.

\bibitem{Lagergren98}
J.~Lagergren.
\newblock Upper bounds on the size of obstructions and intertwines.
\newblock {\em J. Comb. Theory, Ser. B}, 73(1):7--40, 1998.

\bibitem{RobertsonS95b}
N.~Robertson and P.~D. Seymour.
\newblock Graph minors. {XIII.} {T}he disjoint paths problem.
\newblock {\em J. Comb. Theory, Ser. B}, 63(1):65--110, 1995.

\bibitem{RobertsonS04}
N.~Robertson and P.~D. Seymour.
\newblock Graph minors. {XX}. {W}agner's conjecture.
\newblock {\em J. Comb. Theory, Ser. B}, 92(2):325--357, 2004.

\bibitem{RueST12}
J.~Ru{\'e}, K.~S. Stavropoulos, and D.~M. Thilikos.
\newblock Outerplanar obstructions for a feedback vertex set.
\newblock {\em Eur. J. Comb.}, 33(5):948--968, 2012.

\bibitem{TakahashiUK94}
A.~Takahashi, S.~Ueno, and Y.~Kajitani.
\newblock Minimal acyclic forbidden minors for the family of graphs with
  bounded path-width.
\newblock {\em Discrete Math.}, 127:293--304, 1994.

\end{thebibliography}


\begin{thebibliography}{BDFH09}

\bibitem[ACP87]{ArnborgCP87}
Stefan Arnborg, Derek~G. Corneil, and Andrzej Proskurowski.
\newblock Complexity of finding embeddings in a $k$-tree.
\newblock {\em SIAM J. Algebra. Discr.}, 8:277--284, 1987.

\bibitem[BDFH09]{BodlaenderDFH09}
Hans~L. Bodlaender, Rodney~G. Downey, Michael~R. Fellows, and Danny Hermelin.
\newblock On problems without polynomial kernels.
\newblock {\em J. Comput. Syst. Sci.}, 75(8):423--434, 2009.

\bibitem[BJK11]{BodlaenderJK11}
Hans~L. Bodlaender, Bart M.~P. Jansen, and Stefan Kratsch.
\newblock Cross-composition: A new technique for kernelization lower bounds.
\newblock In {\em Proc. 28th STACS}, pages 165--176, 2011.

\bibitem[BM93]{BodlaenderM93}
Hans~L. Bodlaender and Rolf~H. M{\"{o}}hring.
\newblock The pathwidth and treewidth of cographs.
\newblock {\em SIAM J. Discrete Math.}, 6:181--188, 1993.

\bibitem[Bod98]{Bodlaender98}
Hans~L. Bodlaender.
\newblock A partial $k$-arboretum of graphs with bounded treewidth.
\newblock {\em Theor. Comput. Sci.}, 209(1-2):1--45, 1998.

\bibitem[Dru12]{Drucker12}
Andrew Drucker.
\newblock New limits to classical and quantum instance compression.
\newblock In {\em Proc. 53rd FOCS}, pages 609--618, 2012.

\bibitem[GJ79]{GareyJ79}
Michael~R. Garey and David~S. Johnson.
\newblock {\em Computers and Intractability, A Guide to the Theory of
  {NP}-Complete\-ness}.
\newblock W.H. Freeman and Company, New York, 1979.

\bibitem[Kin92]{Kinnersley92}
Nancy~G. Kinnersley.
\newblock The vertex separation number of a graph equals its path-width.
\newblock {\em Inf. Process. Lett.}, 42(6):345--350, 1992.

\bibitem[TUK94]{TakahashiUK94}
Atsushi Takahashi, Shuichi Ueno, and Yoji Kajitani.
\newblock Minimal acyclic forbidden minors for the family of graphs with
  bounded path-width.
\newblock {\em Discrete Math.}, 127:293--304, 1994.

\end{thebibliography}
\end{document}